\author
{ Megan~Owen,
J.~Scott~Provan
\thanks{M. Owen is with the Department of Mathematics at North Carolina State University, Raleigh, NC, 27695.  E-mail:  maowen@ncsu.edu.}
\thanks{J.S.~Provan is with the Department of Statistics and Operations Research at the University of North Carolina, Chapel Hill, NC, 27599.  E-mail:  provan@email.unc.edu.}}
\title{A Fast Algorithm for Computing Geodesic Distances in Tree Space}
\newcommand{\lengthen}[1]{\addtolength{\textheight}{#1}\addtolength{\textheight}{#1}\addtolength{\headsep}{-#1}}
\theoremstyle{plain}
\newtheorem{thm}{Theorem}[section]
\newtheorem{lem}[thm]{Lemma}
\theoremstyle{definition}
\newcommand{\R}{\mathbb{R}}
\newcommand{\T}{\mathcal{T}}
\newcommand{\Or}{\mathcal{O}}
\newcommand{\B}{\mathcal{B}}
\newcommand{\A}{\mathcal{A}}
\newcommand{\bs}{\backslash}
\renewcommand{\S}{\Sigma}
\newcommand{\tr}[1]{ \begin{list}{}{\setlength{\leftmargin}{#1em}} \item}
\newcommand{\tn}{ \item}
\newcommand{\tl}{ \end{list}}
\providecommand{\abs}[1]{\lvert#1\rvert}
\providecommand{\norm}[1]{\lVert#1\rVert}
\begin{document}
\maketitle

\begin{abstract}
Comparing and computing distances between phylogenetic trees are important biological problems, especially for models where edge lengths play an important role.  The geodesic distance measure between two phylogenetic trees with edge lengths is the length of the shortest path between them in the continuous tree space introduced by Billera, Holmes, and Vogtmann.  This tree space provides a powerful tool for studying and comparing phylogenetic trees, both in exhibiting a natural distance measure and in providing a Euclidean-like structure for solving optimization problems on trees.  An important open problem is to find a polynomial time algorithm for finding geodesics in tree space.  This paper gives such an algorithm, which starts with a simple initial path and moves through a series of successively shorter paths until the geodesic is attained.
\end{abstract}

%\newpage
\pagenumbering{arabic}
\section{Introduction}
% written for computational biology or bioinformatics journal

A phylogenetic tree describes the evolutionary history of a set of organisms, with the leaf vertices representing the organisms and the interior vertices representing points at which the evolutionary history branches.  Researchers use different criteria and methods for constructing phylogenetic trees from available data about the set of organisms, which can result in several possible trees or a distribution of trees describing the phylogenetic history.  For example, reconstructing the most likely tree for different genes may yield different trees \cite{RokasEtAl03}; different reconstruction methods can also produce different trees on the same set of organisms \cite{KF94}.  Thus a way is needed to quantitatively compare different phylogenetic trees, by computing some metric describing the differences between them.  Many such distance measures have been proposed, including the Robinson-Foulds or partition distance \cite{RF81}, the Nearest Neighbor Interchange (NNI) distance \cite{R71}, the Subtree-Prune-and-Regraft (SPR) distance \cite{H90}, and the Tree Bisection and Reconnection (TBR) distance \cite{AllenSteel01}.  These measures tend to emphasize the differences in topologies between the trees, and often do not account for edge lengths.  If the edge lengths represent such information as number of mutations between speciation events, we lose important information by ignoring them.  Worst yet, most of these measures cannot be computed efficiently and so are of little use when applied to large trees.

To address this issue, Billera et al. \cite{BHV01} propose the concept of a continuous {\em tree space}, and its associated {\em geodesic distance}\/ metric, as a natural way to embed and compare phylogenetic trees.  This tree space consists of a set of Euclidean regions, called {\em orthants}, one for each tree topology.    Orthants are joined together whenever one tree topology can be made into another by exchanging edges between the trees.  Within an orthant, the coordinates of each point represent the edge lengths for a particular tree with the topology associated with that orthant.  The geodesic between two trees is the unique shortest path connecting the two associated points in this space.  Thus traversing the geodesic corresponds to continuously transforming one tree into the other.  In contrast to previous measures, geodesic distance incorporates in a natural way edge lengths as well as the tree topology.  Furthermore, the uniqueness of the geodesic between any pair of trees and the continuity of the tree space suggest this framework has useful properties with respect to optimization over trees and to formulating statistical measures associated with trees (\cite{Holmes03} and \cite{Holmes05}).  Other versions of tree space with different metrics or no metric have been investigated in phylogenetics contexts (\cite{Hillis}, \cite{Charleston95}, and \cite{Maddison91} for example) and in combinatorial ones (\cite{TrappmannZiegler98} and \cite{RobinsonWhitehouse96}).

Two algorithms have been previously proposed for computing the geodesic distance:  \textsc{GeoMeTree} \cite{KHK08} and \textsc{GeodeMaps} \cite{Owen09}.  Both these algorithms search through an exponential number of candidate paths to find the geodesic, so their run time is exponential in the size of the trees.  Currently there are no known polynomial-time algorithms to find tree space geodesics, although a polynomial time $\sqrt{2}$-approximation of the geodesic distance was given by Amenta et al. \cite{AGPS07}.  Some combinatorial and geometric properties of the space of phylogenetic trees were also presented in \cite{Owen09}.

This paper presents the first polynomial-time method for computing geodesic distances --- and the associated geodesics --- between trees in tree space.  The algorithm uses a different approach from the previous papers, by starting with a simple path between the trees and transforming it into successively shorter paths until the geodesic is obtained.  At each step, the algorithm identifies one new orthant that intersects the geodesic, and transforms the current path so that it passes through this new orthant in an optimal manner.  By restricting consideration to the orthants intersecting the geodesic, the algorithm makes only a polynomial number of path transformations.  Each new orthant is identified by finding a weighted vertex cover in a specially constructed bipartite graph, which also is a polynomial time problem.

Section~\ref{tree space:sect} describes the tree space in which we define the geodesic distance, along with some important geometric and combinatorial properties relevant to finding the geodesic.  Section~\ref{polynomial algorithm:sect} gives the geodesic path algorithm between trees with disjoint edge sets, and establishes its correctness and complexity.  Section~\ref{common edges alg:sect} explains how to efficiently use the geodesic path algorithm when the trees have common edges.  The final section outlines some interesting problems that extend the scope of this algorithm and the associated structures.

\section{Tree Space and Geodesic Distance}\label{tree space:sect}
This section describes the continuous space of phylogenetic trees and the concept of geodesic distance.  For further details, see \cite{BHV01}.  A \emph{phylogenetic $n$-tree}, or just \emph{$n$-tree}, is a tree $T=(X,{\cal E},\S)$, where $X=\{0,1,\ldots,n\}$ is a labeled set of vertices, called {\em leaves}, of degree 1, and ${\cal E}$ is the set of interior (nonleaf) edges, such that each interior vertex of $T$ has degree at least 3.  The leaf 0 is sometimes identified as the {\em root}\/ of $T$, although we do not distinguish it here.  Each interior edge $e$ is given an associated non-negative {\em length}\/ $|e|$, or $\abs{e}_T$ if we want to emphasize the tree $T$ to which $e$ belongs.  For now we do not attach lengths to the leaf edges of $T$, although the relevant properties of tree space apply as well when leaf edge lengths are present.  At the end of Section \ref{polynomial algorithm:sect} we extend our results to trees with leaf edge lengths.  For our purposes it is most convenient to represent the topology of a tree $T$ by its set $\S$ of {\em splits}\/ of the interior edges, where the split $X_e| \overline X_e$ associated with edge $e$ represents the partition of $X$ induced by removing the edge $e$ from $T$.  In order that these splits actually correspond to a tree, they must be {\em compatible}, that is, for every two edges $e$ and $f$, one of the sets $X_e\cap X_f$, $X_e\cap \overline X_f$, $\overline X_e\cap X_f$, or $\overline X_e \cap \overline X_f$ is empty.  A set of $n-2$ compatible splits uniquely determines the topology of an $n$-tree \cite[Theorem 3.1.4]{SS03}.  Because of this correspondence, we henceforth identify edges in two trees if they correspond to the same split.

Two example 5-trees are given in Figure~\ref{tree_examples:fig}.  One can verify that the six given edges are distinct, but that, for example, the edge $e_1$ in $T$ and the edge $e_6$ in tree $T'$ have compatible splits.
\begin{figure}[ht]
%\centering
\input{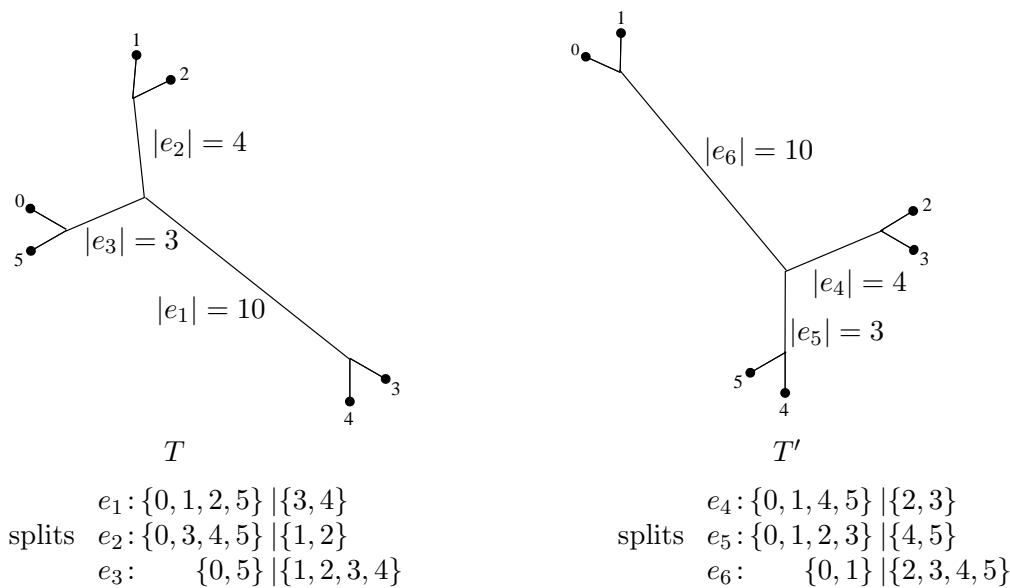}\\[1.5em]
\caption{An example of two 5-trees.}
\label{tree_examples:fig}
%\centering
\end{figure}

\subsection{Tree Space}
%%%%%
The geometric study of the continuous space of phylogenetic trees $\T_n$, or just {\em tree space}, was pioneered by Billera et al. in \cite{BHV01}.  Fix leaf set $X$ of cardinality $n+1$, where the element labeled 0 is either another leaf or the root.  In $\T_n$ each $n$-tree topology is associated with a unique $k$-dimensional Euclidean orthant (the non-negative part of $\R^k$), where $k$ is the cardinality of the set of edges in that tree topology.  We denote the smallest orthant containing tree $T=(X,{\cal E},\S)$ by $\Or(T)=\Or({\cal E}_+)$, where $T$ is identified in $\Or(T)$ by the vector of lengths of its set ${\cal E}_+$ of positive-length edges.  The interiors of the orthants are disjoint, and represent trees with the same topology but varying (positive) edge lengths.  Thus the maximum-dimension orthants have dimension $n-2$, which is the maximum number of interior edges of an $n$-tree.  Orthants of lower dimension correspond to trees with fewer than $n-2$ edges, and effectively identify the points on the boundary of the higher-dimensional orthants. In particular, we can consider a tree $T$ with $k$ positive-length edges to be on the boundary of any orthant of higher dimension for which some subset of edges in its corresponding tree topology can be contracted  --- equivalently, the length of the edges can be set to zero --- to produce the tree $T$.  Note that as a consequence of this property, each edge in $T$ also appears in the tree topology of every orthant containing $\Or(T)$.

%%%%
For example, in Figure~\ref{tree_space_geometry}(a), trees $T_1$ and $T_1'$ are represented by distinct points in the same orthant, because they have the same topology but different edge lengths.  Tree $T_2$ is represented in a different orthant;  $T_1$ and $T_2$ have the same edge $e_1$, and so their orthants will be incident.  In particular, the tree $T_3$, with single interior edge $e_1$, can be obtained from $T_1$ ($T_2$ resp.) by setting edges $e_2$ ($e_3$ resp.) to 0 ---  and thus is a point on the $e_1$ axis common to $\Or(T_1)$ and $\Or(T_2)$.

In general, $\T_n$ can be embedded in $\R^N$, where $N=2^n-n-2$ is the number of possible splits on $n + 1$ leaves.  However, as no point in $\T_n$ has a negative coordinate in $\R^{N}$, we often let the positive and negative parts of an axis correspond to different splits.  This can give a more compact representation of the orthants of interest in tree space.  For example, Figure~\ref{tree_space_geometry}(b) illustrates one way the 2-dimensional orthants of five tree topologies in $\T_4$ can be embedded into $\R^3$, by letting $e_3$-$e_5$ and $e_1$-$e_4$ be represented by the the same coordinates.

\begin{figure}[ht]
%\centering
%\includegraphics{aligned_trees}
\input{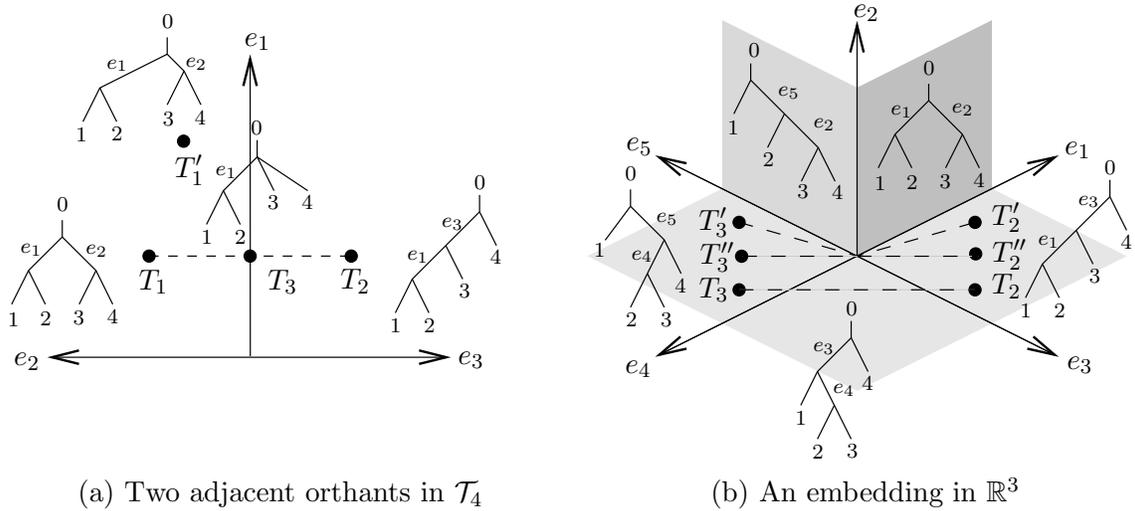}\\[1.5em]
\caption{An example of adjacent orthants in $\T_4$.  Each orthant is labeled with its corresponding tree topology, and the dashed lines indicate the geodesics between the specified trees.}
\label{tree_space_geometry}
%\centering
\end{figure}

\subsection{Geodesic Distance}
The tree space $\T_n$ has two important properties:
\begin{enumerate}
\item
$\T_n$ is {\em path-connected}, so we can find a parameterized set $\Gamma=\{\gamma(\lambda):\;0\leq\lambda\leq 1\}$ of trees $\gamma(\lambda)\in\T_n$ connecting any two $n$-trees. The simplest such path is the {\em cone path} \cite{BHV01}, which consists of the straight line from the first tree to the origin and the straight line from the origin to the second tree.  We can equivalently think of it as the path formed by contracting all of the edges of each tree at the appropriate constant rates. For any path $\Gamma$, denote its length to be $L(\Gamma)$.  For our purposes $\Gamma$ will always be made up of a sequence of connected line segments, each within its own orthant, and so we can write $L(\Gamma)$ as the sum of the Euclidean lengths of these segments.  This provides a natural metric on $\T_n$ by defining the distance $d(T,T')$ between trees $T$ and $T'$ in $\T_n$ to be the length of a shortest path in $\T_n$ between $T$ and $T'$.
\item
 \cite[Lemma~4.1]{BHV01} $\T_n$ is {\em CAT(0)}, or non-positively curved.  This means, roughly speaking, that triangles in $\T_n$ are ``skinnier'' than the corresponding triangles in Euclidean space.  In particular, let $X$, $Y$, $Z$ be any three points in $\T_n$ and let $W$ be any point on a shortest path from $Y$ to $Z$.  Then if we construct a triangle $xyz$ in Euclidean space with edge lengths $|xy|=d(X,Y)$, $|xz|=d(X,Z)$ and $|yz|=d(Y,Z)$, and let $w$ be the point on $yz$ with $|yw|=d(Y,W)$, then $d(X,W)\leq |xw|$.
\end{enumerate}
As $\T_n$ is CAT(0), there is a {\em unique} shortest path $\Gamma^*$ between any two trees $T$ and $T'$ in $\T_n$.    The path $\Gamma^*$ is called the \emph{geodesic}, and the \emph{geodesic distance} between $T$ and $T'$ is defined as $d(T,T')=L(\Gamma^*)$.  Figure~\ref{tree_space_geometry}(a) gives the geodesic (represented by dotted lines) between two trees in adjacent orthants.  This is clearly the straight line between them.  Figure~\ref{tree_space_geometry}(b) gives three geodesics between trees with no edges in common.  In this case the geodesic is either a cone path (as in the ($T'_2$,$T'_3$)- and ($T''_2$,$T''_3$)-geodesic), or it goes through an intermediate orthant (as in the ($T_2$,$T_3$)-geodesic).  Thus the edge lengths, as well as the tree topology, determine the intermediate orthants traversed by the geodesic.

\begin{figure}[ht]
\centering
\input{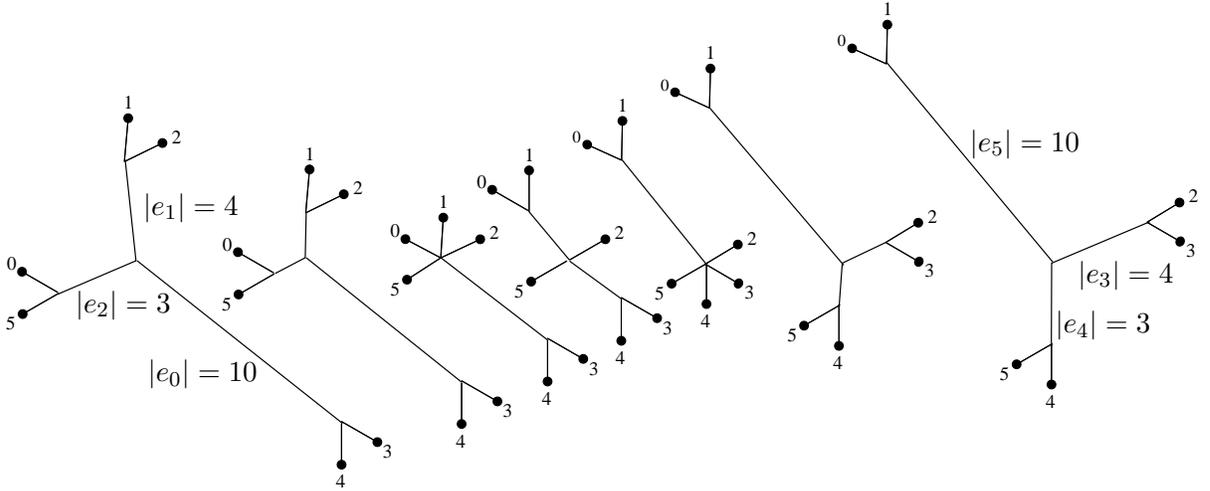}
\caption{A sampling of the trees along the geodesic between the two trees given in Figure~\ref{tree_examples:fig}.}
\label{fig:treepath}
\end{figure}

When the trees have more leaves, the situation becomes more complicated.  For example, the geodesic between the two trees given in Figure~\ref{tree_examples:fig} is illustrated in Figure~\ref{fig:treepath} by a progression of intermediate trees sampled at equidistant points along that geodesic.  This geodesic crosses an orthant boundary at $\lambda=1/3$ and $2/3$, with the intermediate leg corresponding to a tree topology containing only two interior edges.  Two different representations of the sequence of orthants containing the geodesic are given in Figure~\ref{fig:path_space}:  Figure~\ref{fig:path_space}(a) represents the three relevant orthants embedded in $\R^3$, while Figure~\ref{fig:path_space}(b) rotates the three orthants so that the geodesic is represented by a straight line.  Figure~\ref{fig:path_space}(c) gives the tree associated with each leg of the geodesic.  (Points in the figure are labeled with respect to the coordinate system of the orthant containing them.)

\begin{figure}[htb]
\centering
\input{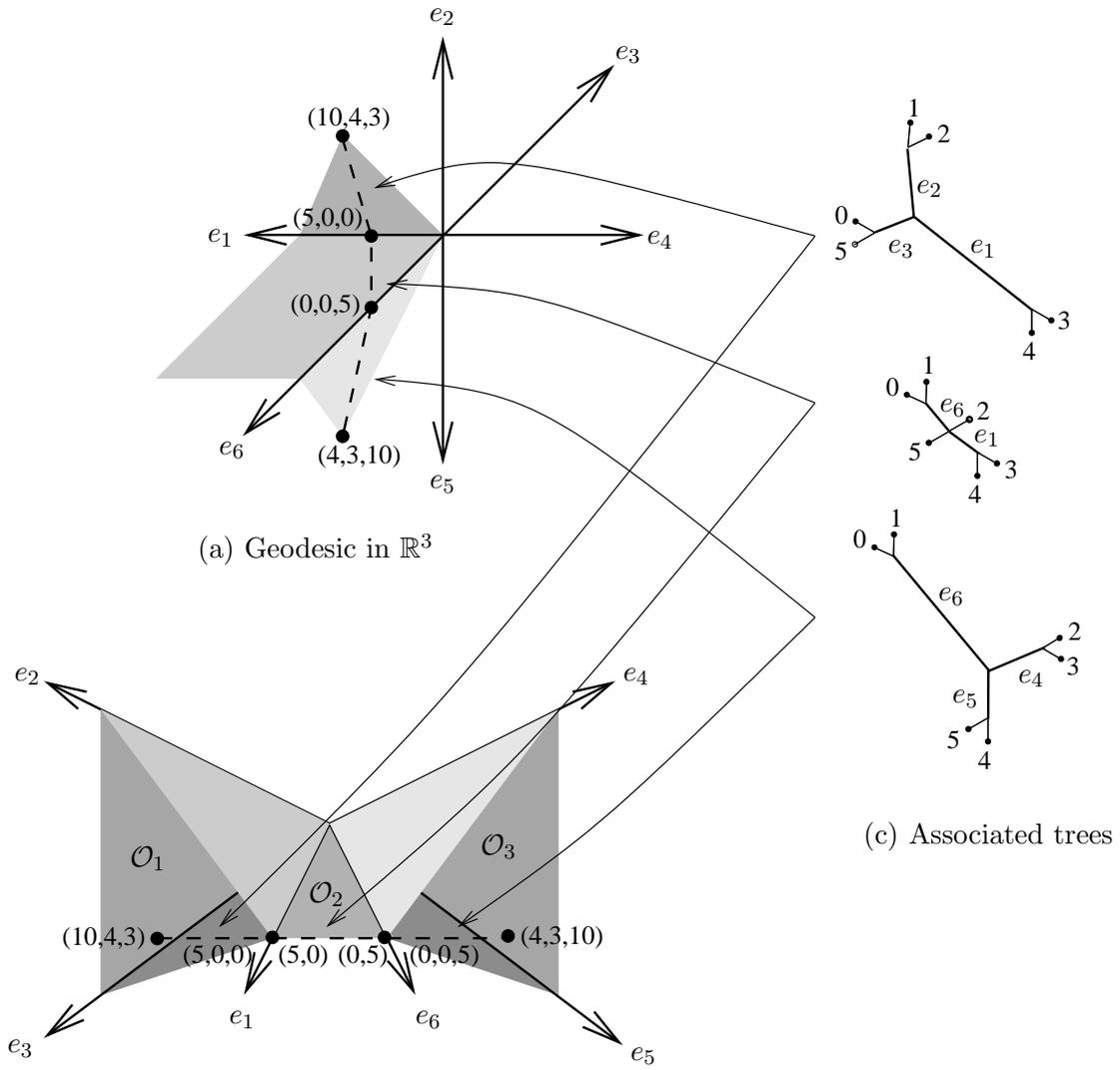}
\centering
\ \\[1em]
\caption{Two representations of the geodesic between the trees in Figure~\ref{tree_examples:fig}.  The geodesic is represented as a dashed line, with the orthants for each topological type in the geodesic identified.}
\label{fig:path_space}
\end{figure}

The definition of geodesic given here differs slightly from the classical definition in a way that is useful to elucidate.  Call a path $\Gamma$ a {\em local geodesic}\/ if there exists some $\varepsilon>0$ so that every subpath of $\Gamma$ of length $\leq \varepsilon$ is the shortest path between its endpoints.  The following result shows that in CAT(0) space this local condition is sufficient to determine the geodesic.  This is also proved in more generality in \cite[Prop. 1.4, Chap. II.1]{BH99}.

\begin{lem}\label{local_optimality}
In a CAT(0) space, every local geodesic is a geodesic.
\end{lem}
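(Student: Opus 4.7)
The plan is to parameterize the local geodesic $\Gamma$ by arc length as $\gamma:[0,L]\to\T_n$ with $\gamma(0)=X$, $\gamma(L)=Y$, and define
\[ S = \{t\in[0,L] : d(X,\gamma(t)) = t\}. \]
The goal is to show $L\in S$, since $d(X,Y)=L=L(\Gamma)$ is equivalent to $\Gamma$ being a geodesic. The set $S$ is closed in $[0,L]$ because $d$ and $\gamma$ are continuous, and $S$ contains $[0,\varepsilon]$ by the local geodesic hypothesis. Let $t^* = \sup S$; then $t^*\in S$, and also any initial segment of a geodesic is a geodesic, so $\gamma|_{[0,t^*]}$ realizes the distance $t^*$ from $X$ to $\gamma(t^*)$. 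I will argue by contradiction that $t^* = L$.

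Suppose $t^* < L$. Choose $\delta > 0$ with $\delta < \varepsilon/2$ and $\delta \leq \min(t^*, L-t^*)$, and set
\[ Z = \gamma(t^*-\delta), \qquad W = \gamma(t^*), \qquad Y' = \gamma(t^*+\delta). \]
Since the subpath from $Z$ to $Y'$ has length $2\delta \leq \varepsilon$, the local geodesic property gives $d(Z,Y') = 2\delta$, and $W$ is the midpoint of this shortest $ZY'$-path. Because $t^* - \delta \in S$ as well (any point of a geodesic segment emanating from $X$ is at the right distance), $d(X,Z) = t^*-\delta$ and $d(X,W) = t^*$.

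Now apply the CAT(0) comparison from the excerpt to the triangle $XZY'$ with $W$ the selected point on the $ZY'$-side. Build the Euclidean comparison triangle $xzy'$ with $|xz| = t^*-\delta$, $|xy'| = d(X,Y')$, $|zy'| = 2\delta$, and let $w$ be the midpoint of $zy'$. The CAT(0) inequality gives $d(X,W) \leq |xw|$, i.e. $t^* \leq |xw|$. The median-length identity in Euclidean geometry yields
\[ |xw|^2 = \tfrac{1}{2}\bigl((t^*-\delta)^2 + d(X,Y')^2\bigr) - \delta^2, \]
so squaring and rearranging $(t^*)^2 \leq |xw|^2$ gives
\[ d(X,Y')^2 \geq 2(t^*)^2 - (t^*-\delta)^2 + 2\delta^2 = (t^*+\delta)^2. \]
Thus $d(X,Y') \geq t^*+\delta$, while the triangle inequality applied along $\gamma|_{[0,t^*+\delta]}$ gives $d(X,Y') \leq t^*+\delta$. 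Hence $t^*+\delta \in S$, contradicting $t^* = \sup S$.

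I expect the main obstacle to be choosing the right triangle and point on which to apply the CAT(0) inequality: one must take the comparison triangle $XZY'$ with $W$ as the \emph{midpoint} and exploit the parallelogram/median identity to convert the CAT(0) bound into a statement about $d(X,Y')$. The remaining items (arc-length parameterization, continuity of $S$, subpaths of geodesics being geodesics, and the final triangle-inequality squeeze) are routine, so the whole argument hinges on that single CAT(0) application.
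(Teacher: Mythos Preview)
Your proof is correct and follows essentially the same approach as the paper's: both apply the CAT(0) comparison to a triangle with apex at the starting point and base a short segment of $\Gamma$ straddling the current ``frontier'' point, then push the frontier forward. The only cosmetic differences are that you run a continuous $\sup$ argument and extract the conclusion via the median-length identity, whereas the paper uses a discrete induction over points $P_0,\ldots,P_k$ and deduces that the comparison triangle is degenerate (collinear) from $d(P_0,P_i)=|p_0p_{i-1}|+|p_{i-1}p_i|$; the underlying CAT(0) step is identical.
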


\begin{proof}
Let $\Gamma$ be a local geodesic from point $P$ to point $Q$ with associated gauge $\varepsilon$.  Denote by $\Gamma(X,Y)$ the portion of $\Gamma$ between points $X$ and $Y$ on $\Gamma$.  Choose disjoint points $P=P_0,P_1,\ldots,P_k=Q$ on $\Gamma$ such that $L(\Gamma(P_{i-1},P_i))<\varepsilon/2$, $i=1,\ldots,k$.  Then by definition $\Gamma(P_{i-1},P_{i+1})$ is a geodesic for $i=1,\ldots,k-1$.

Now assume by induction that the portion $\Gamma(P_0,P_\ell)$ is a geodesic for $1\leq \ell\leq i$.  Then $L(\Gamma(P_0,P_{i-1}))=d(P_0,P_{i-1})$ by induction, and $L(\Gamma(P_{i-1},P_{i+1}))=d(P_{i-1},P_{i+1})$ by the choice of the $P_i$'s.  Construct the triangle $p_0 p_{i-1} p_{i+1}$ in Euclidean space as specified by the definition of a CAT(0) space, and let $p_i$ be the point on $p_{i-1} p_{i+1}$ with $\abs{ p_{i-1} p_i} = d(P_{i-1},P_i)$.  Then $d(P_0,P_i) \leq \abs{p_0 p_i}$.  But by induction we also have that $\Gamma(P_0,P_i)$ is a geodesic, and so
\begin{align*}
d(P_0,P_i) &=L(\Gamma(P_0,P_i))=L(\Gamma(P_0,P_{i-1}))+L(\Gamma(P_{i-1},P_i)) \\
&= d(P_0,P_{i-1}) + d(P_{i-1},P_i)=\abs{p_0 p_{i-1}} + \abs{ p_{i-1} p_i}.
\end{align*}
This plus the triangle inequality gives $ \abs{ p_0 p_{i-1} } + \abs{ p_{i-1} p_i}=\abs{p_0 p_i}$. But the only way this could happen is if $p_{i-1}$ is on $ p_0 p_i$, which in turn implies that $p_{i-1}$ must also be on $p_0 p_{i+1}$.  Thus $d(P_0 ,P_{i+1})  = d(P_0,P_{i-1}) + d(P_{i-1},P_{i+1})$, and so $\Gamma(P_0,P_{i+1})$ is also a geodesic.  This establishes the inductive step, and the lemma follows.
\end{proof}

It is this result which motivated the idea of this paper.  Namely, we can find a geodesic between trees $T$ and $T'$ by starting with any $(T,T')$-path in $\T_n$, determining whether it is a local geodesic, and if not, transforming it into a shorter $(T,T')$-path.

We define the {\em Geodesic Treepath Problem (GTP)}, to be the problem of finding the geodesic between two trees in $\T_n$.  The remainder of the paper constructs a polynomial-time algorithm for solving GTP.

\subsection{The Path Space of a Geodesic}\label{geometry_of_pathspaces}
Billera et al. \cite{BHV01} showed that the geodesic between $T$ and $T'$ is contained in a sequence of orthants, called a {\em path space}, satisfying certain properties.  These properties were further clarified in \cite{Owen09}.  We  summarize, from \cite[Section 4]{Owen09}, the relevant properties of the shortest path, or geodesic, through a particular path space.  For all path spaces between $T$ and $T'$, the shortest of these path space geodesics will be the geodesic between $T$ and $T'$.

We start with some preliminary assumptions and definitions.  For now we assume that $T$ and $T'$ are \emph{disjoint}, that is, have no common edges.  (We will show at the end of Section  \ref{polynomial algorithm:sect} how to handle common edges between $T$ and $T'$.)  We say that edge sets $A\subset {\cal E}$ and $B\subset {\cal E}'$ are {\em compatible} if every pair of the splits associated with $A$ in $\S$ and $B$ in $\S'$ are compatible, or equivalently, if $A\cup B$ determines a unique $n$-tree.

Let $T=(X,{\cal E},\S)$ and $T'=(X,{\cal E}',\S')$ be disjoint $n$-trees, and let ${\cal A}=(A_1,\ldots,A_k)$  and ${\cal B}=(B_1,\ldots,B_k)$ be partitions of ${\cal E}$  and ${\cal E}'$, respectively, such that the pair $({\cal A},{\cal B})$ satisfies the following property:
\tr 6
[\bf (P1)] \emph{For each $i>j$, $A_i$ and $B_j$ are compatible.}
\tl
Then for all $1 \leq i \leq k$, $B_1 \cup \cdots \cup B_i \cup A_{i+1} \cup \cdots \cup A_k $ is a compatible set, and hence $\Or_i = \Or(B_1 \cup \cdots \cup B_i \cup A_{i+1} \cup \cdots \cup A_k)$ is an orthant in tree space.  Furthermore, the union ${\cal P}=\cup_{i=1}^k \mathcal{O}_i$ of these orthants forms a connected space.  We call ${\cal P}$ a {\em path space}, the pair $({\cal A},{\cal B})$ its {\em support}, and the shortest $(T,T')$-path through ${\cal P}$ the {\em path space geodesic}\/ for ${\cal P}$.

Billera et al. proved the following result \cite[Proposition 4.1]{BHV01} (using the notation $E_i = A_{i+1} \cup \cdots \cup  A_k$ and $F_i = B_1 \cup \cdots \cup  B_i$ for all $1 \leq i \leq k$).
 \begin{thm}\label{path_space_thm}
For disjoint $n$-trees $T$ and $T'$, the geodesic between $T$ and $T'$ is a path space geodesic for some path space between $T$ and $T'$.
\end{thm}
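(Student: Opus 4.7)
The plan is to read the pair $(\A,\B)$ directly off of the geodesic $\Gamma^*$ itself. Since $\Gamma^*$ has finite length and is straight inside each orthant it meets, I can list the sequence of orthants it visits as $\Or_0=\Or(T),\Or_1,\ldots,\Or_k=\Or(T')$, where $\Or_i$ is the (unique smallest) orthant containing the interior of the $i$-th linear piece of $\Gamma^*$. Let $S_i$ denote the split set of the tree topology associated with $\Or_i$. I then set
\[
A_i=\{e\in{\cal E}:e\in S_{i-1},\,e\notin S_i\},\qquad B_i=\{e\in{\cal E}':e\in S_i,\,e\notin S_{i-1}\}
\]
for $i=1,\ldots,k$, and verify that $\A=(A_1,\ldots,A_k)$ and $\B=(B_1,\ldots,B_k)$ partition ${\cal E}$ and ${\cal E}'$ and satisfy property (P1).

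The main step is to show (a) every split appearing in some $S_i$ lies in ${\cal E}\cup{\cal E}'$, and (b) for each such split $e$, the set $\{i:e\in S_i\}$ is a contiguous subinterval of $\{0,\ldots,k\}$. Both are consequences of one projection argument combined with Lemma~\ref{local_optimality}. Given a split $e$, let $|e|(t)$ denote the $e$-coordinate of $\gamma^*(t)$ in the ambient $\R^N$; this is a continuous, piecewise-linear, non-negative function. Suppose $|e|(t_a)=|e|(t_b)=0$ while $|e|>0$ somewhere on $(t_a,t_b)$. Every linear piece of $\gamma^*|_{[t_a,t_b]}$ lies in some orthant whose $\{|e|=0\}$ face is again a valid orthant of $\T_n$, since subsets of compatible split sets are compatible; moreover, the common boundary between two consecutive orthants restricts to a common boundary of the two projected faces. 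Replacing each piece by its orthogonal projection onto $\{|e|=0\}$ therefore yields a continuous path in $\T_n$ from $\gamma^*(t_a)$ to $\gamma^*(t_b)$, and by the Pythagorean identity this path is strictly shorter than $\gamma^*|_{[t_a,t_b]}$ because the $e$-component of the velocity is nonzero on a set of positive measure. Splicing in the shorter segment contradicts Lemma~\ref{local_optimality}. Applied to $e\notin{\cal E}\cup{\cal E}'$, where $|e|(0)=|e|(1)=0$, this forces $|e|\equiv 0$ and gives (a); applied to $e\in{\cal E}$ (resp.\ $e\in{\cal E}'$), it forces $|e|$ to be positive on an initial (resp.\ terminal) subinterval of $[0,1]$ and zero elsewhere, which gives (b).

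With (a) and (b) in hand, each edge of ${\cal E}$ is dropped exactly once and each edge of ${\cal E}'$ introduced exactly once along $\Gamma^*$, so $\A$ and $\B$ partition ${\cal E}$ and ${\cal E}'$. Telescoping the definitions of the $A_i$ and $B_i$ gives $S_i=B_1\cup\cdots\cup B_i\cup A_{i+1}\cup\cdots\cup A_k$, so $\Or_i$ coincides with the orthant built from $(\A,\B)$. Property (P1) is then immediate: for $i>j$, both $A_i$ and $B_j$ are subsets of the compatible set $S_j$, so every split in $A_i$ is compatible with every split in $B_j$. Hence $\Gamma^*$ lies entirely in the path space with support $(\A,\B)$, and is therefore a path space geodesic.

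The hardest part is the projection step in the second paragraph: one has to check both that projecting each linear piece onto $\{|e|=0\}$ keeps the path inside $\T_n$ (this uses compatibility and the shared-face structure between consecutive orthants) and that it produces a strict decrease in length (this uses the strict positivity of $|e|$ on a set of positive measure). Everything else is combinatorial bookkeeping.
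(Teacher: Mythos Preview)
The paper does not give its own proof of this theorem; it attributes the result to Billera, Holmes, and Vogtmann \cite[Proposition~4.1]{BHV01} and simply quotes it. So there is no in-paper argument to compare against.

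Your direct argument is essentially sound and is in the same spirit as the original BHV proof: read the support $(\A,\B)$ off the orthant sequence traversed by $\Gamma^*$, and use a projection/shortening argument to rule out extraneous splits and re-entries. Two small comments. First, the appeal to Lemma~\ref{local_optimality} is in the wrong direction: that lemma says local geodesics are geodesics, whereas what you actually use is the trivial converse that a subpath of the geodesic is itself shortest between its endpoints. Just say the splice gives a strictly shorter $(T,T')$-path, contradicting the definition of $\Gamma^*$. Second, your construction could in principle produce an empty $A_i$ or $B_i$ at a transition where only additions (or only deletions) occur; the paper's definition of a path space asks for genuine partitions of ${\cal E}$ and ${\cal E}'$ into the \emph{same} number $k$ of blocks. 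You should remark either that such degenerate transitions cannot occur along a geodesic (another quick projection/straightening argument handles this), or that any empty block can be absorbed into a neighbor without affecting (P1) or the containment $\Gamma^*\subset{\cal P}$. With those two cosmetic fixes the proof is complete.
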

\noindent

In \cite[Section 4]{Owen09}, the requirements for path spaces to contain a geodesic are made more explicit, and the construction of the actual path space geodesic is given.  We summarize the results of this research (Proposition 4.1, Proposition 4.2, Corollary 4.3, Theorem 4.4, and Theorem 4.10 of \cite{Owen09}) below.  For set $A$ of edges, we use the notation $\norm{A}=\sqrt{\parbox{45pt}{$\sum_{e\in A}|e|^2$}}$ to denote the norm of the vector whose components are the lengths of the edges in $A$.

\begin{thm} \label{proper_path}
Let $T=(X,{\cal E},\S)$ and $T'=(X,{\cal E}',\S')$ be two $n$-trees, and let $\Gamma$ be the geodesic in $\T_n$ between $T$ and $T'$.  Then $\Gamma$ can be represented as a path space geodesic with support ${\cal A}=(A_1,\ldots,A_k)$  of ${\cal E}$ and ${\cal B}=(B_1,\ldots,B_k)$ of ${\cal E}'$ which satisfy $(P1)$ plus the following additional property:
\tr 6
[\bf (P2)] $\frac{\norm{A_1}}{\norm{B_1}}\leq \frac{\norm{A_2}}{\norm{B_2}} \leq \ldots \leq \frac{\norm{A_k}}{\norm{B_k}}$.
\tl
\end{thm}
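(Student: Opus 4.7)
The plan is to start from a support realizing $\Gamma$ guaranteed by Theorem~\ref{path_space_thm}, and show that any violation of (P2) permits a strict shortening of the path via a block-merging operation, contradicting the minimality of $\Gamma$. By Theorem~\ref{path_space_thm} there is a support $(\mathcal{A},\mathcal{B}) = ((A_1,\ldots,A_k),(B_1,\ldots,B_k))$ satisfying (P1) whose path space geodesic is $\Gamma$. I would assume toward a contradiction that (P2) fails at some index $i$, i.e., $\norm{A_{i-1}}/\norm{B_{i-1}} > \norm{A_i}/\norm{B_i}$.

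The key construction is a merged support $(\mathcal{A}',\mathcal{B}')$ obtained by replacing the pair $A_{i-1},A_i$ with $A_{i-1}\cup A_i$ and the pair $B_{i-1},B_i$ with $B_{i-1}\cup B_i$, leaving all other blocks unchanged. Property (P1) for $(\mathcal{A}',\mathcal{B}')$ is straightforward to verify: the new compatibility requirements involve the merged block with unchanged blocks at other indices, and each follows from the original (P1) applied to one of the constituents. In particular, the intermediate orthant $\Or(B_1\cup\cdots\cup B_{i-2}\cup A_{i+1}\cup\cdots\cup A_k)$ of the merged path space is a legitimate orthant of $\T_n$.

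The main obstacle is showing that the path space geodesic of $(\mathcal{A}',\mathcal{B}')$ is strictly shorter than $\Gamma$. Using the uniform-scaling description of path space geodesics from \cite{Owen09} (the edges of each block maintain fixed ratios along each straight-line leg), the portion of $\Gamma$ traversing $\mathcal{O}_{i-1}\cup\mathcal{O}_i$ projects, in the relevant two-dimensional coordinate subspace, onto a broken path from $(\norm{A_{i-1}}+\norm{A_i},0)$ through $(\norm{A_i},\norm{B_{i-1}})$ to $(0,\norm{B_{i-1}}+\norm{B_i})$, of total length
\begin{equation*}
\sqrt{\norm{A_{i-1}}^2+\norm{B_{i-1}}^2}\;+\;\sqrt{\norm{A_i}^2+\norm{B_i}^2}.
\end{equation*}
Replacing this sub-path with a straight segment through the merged orthant yields length $\sqrt{(\norm{A_{i-1}}+\norm{A_i})^2+(\norm{B_{i-1}}+\norm{B_i})^2}$, which by Minkowski's inequality in $\R^2$ is strictly smaller whenever the vectors $(\norm{A_{i-1}},\norm{B_{i-1}})$ and $(\norm{A_i},\norm{B_i})$ are not parallel --- which is exactly what the failure of (P2) guarantees, since the ratios are unequal.

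Splicing this strictly shorter sub-path into $\Gamma$ produces a $(T,T')$-path of length less than $L(\Gamma)$, contradicting Lemma~\ref{local_optimality} and the uniqueness of the geodesic in the CAT(0) space $\T_n$. Hence (P2) must hold for $(\mathcal{A},\mathcal{B})$, completing the proof.
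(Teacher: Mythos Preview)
The paper does not supply its own proof of this theorem; it is stated as a summary of results from \cite{Owen09} (Propositions~4.1--4.2, Corollary~4.3, Theorems~4.4 and~4.10), so there is no in-paper argument to compare against directly.

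Your argument has a genuine gap. You assert that the portion of $\Gamma$ through $\Or_{i-2}\cup\Or_{i-1}\cup\Or_i$ is the ``broken path'' of length $\sqrt{\norm{A_{i-1}}^2+\norm{B_{i-1}}^2}+\sqrt{\norm{A_i}^2+\norm{B_i}^2}$, citing the uniform-scaling description of path space geodesics. But that description --- Theorem~\ref{geodesic_structure} and formula~(\ref{pathlength}) --- is established only for \emph{proper} path spaces, i.e.\ ones already satisfying (P2). When (P2) fails at index $i$, the shortest path through $\mathcal P$ does \emph{not} enter the interior of $\Or_{i-1}$: in the two-dimensional $(A_{i-1}/B_{i-1},\,A_i/B_i)$ picture, the straight segment from $T$ to $T'$ would cross the quadrant corresponding to $A_{i-1}\cup B_i$, which need not be compatible and is not part of $\mathcal P$, so the path space geodesic instead bends through the origin of that plane. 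Thus $\Gamma$ already coincides with the path space geodesic of your merged support $(\mathcal A',\mathcal B')$, the merge yields no strictly shorter path, and the intended contradiction evaporates. Moreover, your ``straightened'' length $\sqrt{(\norm{A_{i-1}}+\norm{A_i})^2+(\norm{B_{i-1}}+\norm{B_i})^2}$ is not the length of any path in the merged space: the merged space omits $\Or_{i-1}$, and its path space geodesic is the local cone path of length $\sqrt{\norm{A_{i-1}}^2+\norm{A_i}^2}+\sqrt{\norm{B_{i-1}}^2+\norm{B_i}^2}$. For a concrete failure take $k=2$, $\norm{A_1}=3$, $\norm{B_1}=1$, $\norm{A_2}=2$, $\norm{B_2}=5$: your broken length is $\sqrt{10}+\sqrt{29}$, the true path space geodesic has length $\sqrt{13}+\sqrt{26}$, and your ``straightened'' value $\sqrt{61}$ corresponds to no path in $\mathcal P$.

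The merging idea is right; only the logical structure needs changing. Replace the contradiction with a descent on $k$: if (P2) fails at $i$, argue that $\Gamma$ avoids the interior of $\Or_{i-1}$ and hence lies in the merged path space, so $\Gamma$ is also the path space geodesic for $(\mathcal A',\mathcal B')$, which has one fewer block. Iterating, you reach a support for $\Gamma$ satisfying (P2). This is essentially the route taken in \cite{Owen09}.
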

\noindent
We call a path space satisfying conditions (P1) and (P2) a {\em proper path space}, and the associated path space geodesic a {\em proper path}.

The following theorem summarizes results from \cite{Owen09}.

\begin{thm}\label{geodesic_structure}
Let $\Gamma=(\gamma(\lambda):\;0\leq\lambda\leq 1)$ be a proper path between $T$ and $T'$ with support $({\cal A},{\cal B})$.  Then $\Gamma$ can be represented in $\T_n$ with legs
\[
\Gamma^i=\left\{\begin{array}{ll}
\left[\gamma(\lambda):\; \frac{\lambda}{1-\lambda}\leq\frac{||A_1||}{||B_1||}\right],
&i=0\\[.7em]
\left[\gamma(\lambda):\; \frac{||A_i||}{||B_i||}\leq\frac{\lambda}{1-\lambda}\leq\frac{||A_{i+1}||}{||B_{i+1}||}\right],
&i=1,\ldots,k-1,\\[.7em]
\left[\gamma(\lambda):\; \frac{\lambda}{1-\lambda}\geq\frac{||A_k||}{||B_k||}\right],
&i=k\end{array}\right.\]
 where the points on each leg $\Gamma^i$ are associated with tree $T_i=(X,{\cal E}^i,\S^i)$ having edge set
\[\begin{array}{rcl}
{\cal E}^i&=&B_1\cup\ldots\cup B_i\cup A_{i+1}\cup\ldots\cup A_k
\end{array}\]
edge lengths
\[
|e|_{T_i}=\left\{\begin{array}{ll}
        \frac{(1-\lambda)||A_j||-\lambda ||B_j||}{||A_j||}|e|_T&e\in A_j\\[1em]
        \frac{\lambda ||B_j||-(1-\lambda)||A_j||}{||B_j||}|e|_{T'}&e\in B_j\\
       \end{array}\right.
\]
and splits
\[
\begin{array}{rcl}
\S^i_e&=&\left\{\begin{array}{ll}
        X_e|\overline X_e&e\in A_j\\
        X'_e|\overline X'_e&e\in B_j\\
       \end{array}\right.
\end{array}\]
\noindent Furthermore, the length of $\Gamma$ is
\begin{equation}\label{pathlength}
L(\Gamma)=\bigg\Arrowvert(||A_1||,\ldots,||A_k||)+(||B_1||,\ldots,||B_k||)\bigg\Arrowvert.
\end{equation}
\end{thm}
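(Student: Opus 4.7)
My plan is to obtain a Minkowski/triangle-inequality lower bound on $L(\Gamma)$, verify that the path given by the theorem's formulas attains this bound, and invoke the equality conditions to pin down $\Gamma$'s exact form.

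Since each orthant $\Or_i = \Or(B_1\cup\cdots\cup B_i\cup A_{i+1}\cup\cdots\cup A_k)$ is isometric to a subset of Euclidean space, the restriction $\Gamma^i$ of $\Gamma$ to $\Or_i$ is a straight line segment, so $\Gamma$ is piecewise linear with a well-defined velocity $v(\lambda)$ on the interior of each leg. Group the coordinates into $k$ \emph{blocks}, block $j$ consisting of the coordinates in $A_j\cup B_j$. Since $A_j$ appears in $\Or_i$ only when $j>i$ and $B_j$ only when $j\leq i$, the projection $v_j(\lambda)$ of $v$ onto block $j$ is supported entirely in the $A_j$- or entirely in the $B_j$-coordinates at each $\lambda$. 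Minkowski's integral inequality followed by the triangle inequality in each block subspace yields
\[
L(\Gamma) = \int_0^1\!\sqrt{\sum_j\|v_j(\lambda)\|^2}\,d\lambda \;\geq\; \sqrt{\sum_j\Bigl(\int_0^1\!\|v_j(\lambda)\|\,d\lambda\Bigr)^2} \;\geq\; \sqrt{\sum_j(\|A_j\|+\|B_j\|)^2},
\]
since each block-$j$ projection traces a curve from $(|e|_T)_{e\in A_j}$ through the block-$j$ origin to $(|e|_{T'})_{e\in B_j}$, of arclength at least $\|A_j\|+\|B_j\|$. This matches (\ref{pathlength}).

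I then take as a candidate the path defined by the theorem's formulas. Continuity at each transition $\lambda=\tau_j:=\|A_j\|/(\|A_j\|+\|B_j\|)$ (i.e., $\lambda/(1-\lambda)=\|A_j\|/\|B_j\|$) follows by direct substitution, since both $A_j$- and $B_j$-coordinates vanish there; the ordering (P2) is exactly $\tau_1\leq\cdots\leq\tau_k$, so the candidate traverses $\Or_0,\ldots,\Or_k$ in sequence and stays in $\mathcal{P}$. Within each leg the coordinates are affine in $\lambda$, and a direct velocity computation gives block-wise speed $\|A_j\|+\|B_j\|$ at every $\lambda$, hence constant total speed $\sqrt{\sum_j(\|A_j\|+\|B_j\|)^2}$ and total length equal to the lower bound. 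Applying the Minkowski and triangle equality conditions to any path attaining this length forces block-wise proportionality $|e|(\lambda)=\mu_j(\lambda)|e|_T$ for $e\in A_j$ (and analogously for $B_j$), linearity of $\mu_j,\nu_j$, and the transition times $\tau_j$ stated above, thereby showing $\Gamma$ has the claimed form.

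The main obstacle is justifying the Minkowski decoupling rigorously, since the velocity's support changes discontinuously at each $\tau_j$ when one block switches from $A_j$- to $B_j$-coordinates, and then extracting from the equality case the exact form of the minimizer (in particular showing that the block-wise projections must be genuinely straight across all legs, not merely piecewise-straight). Once these analytic details are in place, the explicit edge-length formulas and the length identity (\ref{pathlength}) both follow from the straight-line-per-leg structure of the candidate.
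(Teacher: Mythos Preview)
The paper does not actually prove this theorem: it is stated as a summary of results from \cite{Owen09} (Propositions~4.1--4.2, Corollary~4.3, Theorems~4.4 and~4.10 there), with no argument given in the present paper. Your proposal is thus not a replication but a self-contained proof, and the strategy is sound. The block-wise decomposition together with the vector-valued triangle inequality $\int_0^1\|g(\lambda)\|\,d\lambda\ge\bigl\|\int_0^1 g(\lambda)\,d\lambda\bigr\|$ applied to $g=(\|v_1\|,\ldots,\|v_k\|)$ yields the lower bound, and the direct computation that the candidate path has constant speed $\bigl(\sum_j(\|A_j\|+\|B_j\|)^2\bigr)^{1/2}$ shows the bound is attained. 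One small simplification: you need not assume in advance that $\Gamma\cap\Or_i$ is a single straight segment. The lower bound holds for \emph{any} rectifiable path in $\mathcal{P}$ (the velocity exists a.e.\ and your inequalities go through unchanged), and single-segment-per-orthant then falls out of the equality analysis rather than being an input to it. Your worry about the support of $v$ changing at the $\tau_j$ is likewise not a real obstacle, since the inequalities are integral and insensitive to null sets.

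From the paper's remark that ``the isometric map presented in \cite[Theorem 4.4]{Owen09}'' sends the geodesic to the straight segment from $(\|A_1\|,\ldots,\|A_k\|)$ to $(-\|B_1\|,\ldots,-\|B_k\|)$ in $\R^k$, the route taken in \cite{Owen09} evidently builds an explicit norm-collapsing isometry from a region of $\mathcal{P}$ onto a convex subset of $\R^k$, after which the geodesic is just a Euclidean straight line and both the formula for $\gamma(\lambda)$ and the length~(\ref{pathlength}) are read off directly. Your Minkowski/triangle argument is essentially the dual of this: it proves the length inequality without constructing the flattening map, and the equality case recovers the same straight-line structure. The isometry picture makes the ``unfolding'' in Figure~\ref{fig:path_space}(b) literal and gives uniqueness for free via Euclidean convexity; your variational argument is more portable and avoids having to check that the norm map is an isometry on the relevant cones, at the cost of the equality-case bookkeeping you flag.
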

\noindent
{\bf Remark:}  It is easy to see that if any two adjacent support pairs in a proper path space have their ratios in (P2) equal, then combining them again results in a proper path space.  That is, if $(\A, \B)$ is as in Theorem~\ref{proper_path}, and if $\frac{\norm{A_i}}{\norm{B_i}} =  \frac{\norm{A_{i+1}}}{\norm{B_{i+1}}}$ for some $1 \leq i < k$, then $(\A', \B')$, where $\A' = (A_1, ..., A_{i_1}, A_i \cup A_{i+1}, A_{i+2}, ..., A_k)$ and $\B' = (B_1, ..., B_{i_1}, B_i \cup B_{i+1}, B_{i+2}, ..., B_k)$, is also the support of a proper path space.  Further, from the description given in Theorem~\ref{geodesic_structure}, the associated proper path $\Gamma$ does not pass through the interior of the deleted orthant, and hence will also be a proper path for the new path space.  It follows that we can produce a path space for $\Gamma$ for which all of the inequalities in (P2) are {\em strict}. It is shown in \cite[Section 4.2.1]{Owen09} that this in fact is a {\em unique}\/ representation for $\Gamma$.  In this paper, however, we find it more convenient to allow relaxed inequalities in defining proper paths.
\\[1em]
{\bf Example 1:} The cone path between trees $T$ and $T'$ is the path space geodesic for the path space consisting of the two orthants containing the original trees, that is, ${\cal A}=\{{\cal E}\}$ and ${\cal B}=\{{\cal E}'\}$.  This trivially satisfies (P1) and (P2), and the associated proper path is simply the union of the two straight lines connecting $T$ and $T'$ to the origin.
\\[1em]
{\bf Example 2:} For the geodesic given in Figure~\ref{fig:treepath}, the associated path space shown in Figure~\ref{fig:path_space} consists of the starting orthant, the target orthant, and a single intermediate orthant of dimension two on edges $\{e_1,e_6\}$.  Thus the support for this path space will be ${\cal A}=(\{e_2,e_3\},\{e_1\})$ and ${\cal B}=(\{e_6\},\{e_4,e_5\})$, which is proper since
\begin{equation}\label{P2 example}
\frac{||A_1||}{||B_1||}=\frac{||(3,4)||}{10}< \frac{10}{||(3,4)||}=\frac{||A_2||}{||B_2||}.
\end{equation}

  The coordinates (edge lengths) of the path space geodesic as it passes through the intermediate orthant can be ascertained from the representation in Figure~\ref{fig:path_space}(b).  Here the orthants have been positioned so that the geodesic through them is a straight line.  This can be done using the isometric map presented in \cite[Theorem 4.4]{Owen09} from the shaded regions shown in Figure~\ref{fig:path_space}(a) to $\R^2$, and maps the geodesic to the straight line $(\norm{A_1}, \norm{A_2})$ to $(-\norm{B_1}, -\norm{B_2})$.  The length of this line, which is also the length of the path, is
\begin{eqnarray*}
L(\Gamma)&=&\bigg\Arrowvert(||\{e_2,e_3\}||,||\{e_1\}||)+(||\{e_6\}||,||\{e_4,e_5\}||)\bigg\Arrowvert\\[.5em]
&=&\bigg\Arrowvert(||(3,4)||,10)+(10,||(3,4)||)\bigg\Arrowvert=15\sqrt 2
\end{eqnarray*}

Theorem~\ref{proper_path} does not completely characterize the geodesic, in that a path can be proper without being the geodesic.  Consider the two trees $T_2$ and $T_3$ given in Figure~\ref{tree_space_geometry}(b).  The cone path between $T_2$ and $T_3$ is proper, but this path space does not contain the geodesic.  It is necessary to add the  orthant $\Or(\{e_3, e_4\})$ to this cone path space to get the proper path space containing the geodesic.

To check whether we can add such an intermediate orthant to the current candidate path space and shorten the proper path length, we need to check whether we can partition some support pair $(A_i,B_i)$ into two support pairs, such that the addition of the new orthant again results in a proper path space.  That is, we drop some subset of the edges in $A_i$ and add a subset of the edges in $B_i$ to enter the new orthant, and then drop and add the remaining edges to reach the original succeeding orthant.

However, even if such an intermediate orthant exists, adding it to the path space may not result in a shorter proper path.  For example, for the trees $T_2''$ and $T_3''$ in Figure~\ref{tree_space_geometry}(b), we could add orthant $\Or(\{e_3, e_4\})$ to the cone path space and obtain a proper path space, but the proper path for this space will be the same length (actually the same path) as it is for the original path space.  What we need are additional conditions for determining whether adding a specified intermediate orthant will result in a shorter proper path.  As the next result shows, these conditions in fact characterize a proper path as a geodesic.

\begin{thm}{}
\label{lem:when_is_support}
A proper $(T,T')$-path $\Gamma$ with support $({\cal A},{\cal B})$ satisfying {\rm (P1)} and {\rm (P2)} is a geodesic if and only if $({\cal A},{\cal B})$ satisfies the following additional property:
\tr 4
\begin{enumerate}
\item[\bf (P3)] For each support pair $(A_i,B_i)$, there is no nontrivial partition $C_1 \cup C_2$ of $A_i$ and partition $D_1 \cup D_2$ of $B_i$, such that $C_2$ is compatible \vspace{.3em} with $D_1$ and $\frac{ \norm{C_1}}{ \norm{D_1}} < \frac{ \norm{ C_2}}{ \norm{D_2} }$.
\end{enumerate}
\tl
\end{thm}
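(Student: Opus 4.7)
The plan is to exploit the explicit length formula~(\ref{pathlength}) together with the Cauchy--Schwarz inequality applied to the pairs $(\norm{C_1},\norm{C_2})$ and $(\norm{D_1},\norm{D_2})$, using the strict/non-strict dichotomy of Cauchy--Schwarz to convert an algebraic inequality into a ``local detour is strictly shorter'' statement. The statement ``$\Gamma$ is a geodesic'' is then handled by Lemma~\ref{local_optimality}, so in both directions it suffices to reason locally at the transition points $P\in\Or_{i-1}\cap\Or_i$ of $\Gamma$.

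$(\Rightarrow)$ I argue the contrapositive. If (P3) fails at index $i$, I have nontrivial partitions $C_1\cup C_2=A_i$, $D_1\cup D_2=B_i$ with $C_2$ compatible with $D_1$ and $\norm{C_1}/\norm{D_1}<\norm{C_2}/\norm{D_2}$. Form the refined support $\tilde{\A}=(\dots,A_{i-1},C_1,C_2,A_{i+1},\dots)$, $\tilde{\B}=(\dots,B_{i-1},D_1,D_2,B_{i+1},\dots)$; it still satisfies (P1), since all new compatibilities are inherited from $A_i\subseteq\mathcal{E}$ or $B_i\subseteq\mathcal{E}'$ except for the pair $(C_2,D_1)$, which is assumed. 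Writing $\tilde\Gamma$ for the path associated to the refined support and using $\norm{A_i}^2=\norm{C_1}^2+\norm{C_2}^2$, $\norm{B_i}^2=\norm{D_1}^2+\norm{D_2}^2$ in~(\ref{pathlength}) gives
\[
L(\Gamma)^2-L(\tilde\Gamma)^2=2\bigl(\norm{A_i}\norm{B_i}-\norm{C_1}\norm{D_1}-\norm{C_2}\norm{D_2}\bigr).
\]
Cauchy--Schwarz yields $\norm{A_i}\norm{B_i}\ge\norm{C_1}\norm{D_1}+\norm{C_2}\norm{D_2}$, strict precisely when $\norm{C_1}/\norm{D_1}\ne\norm{C_2}/\norm{D_2}$; the hypothesis gives strictness, so $L(\tilde\Gamma)<L(\Gamma)$, contradicting that $\Gamma$ is a geodesic. (If the refined support fails (P2), the remark after Theorem~\ref{geodesic_structure} supplies a further merge/reorder producing a proper path no longer than $\tilde\Gamma$, so the strict decrease is preserved.)

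$(\Leftarrow)$ Assume (P1), (P2), and (P3). By Lemma~\ref{local_optimality} it suffices to show $\Gamma$ is a local geodesic. On the interior of each leg $\Gamma^i$, Theorem~\ref{geodesic_structure} shows $\Gamma$ is a straight Euclidean segment and is trivially locally shortest. At a transition point $P\in\Or_{i-1}\cap\Or_i$, pick a small ball about $P$. Any competing subpath either stays in $\Or_{i-1}\cup\Or_i$, where unfolding produces a Euclidean half-plane in which $\Gamma$ is the straight line joining its endpoints and hence shortest, or detours through a third orthant $\Or''$ incident to both $\Or_{i-1}$ and $\Or_i$ at $P$. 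Any such $\Or''$ can differ from $\Or_{i-1}$ and $\Or_i$ only in the way edges of $A_i\cup B_i$ are split; its edge set must therefore have the form $(B_1\cup\cdots\cup B_{i-1})\cup D_1\cup C_2\cup(A_{i+1}\cup\cdots\cup A_k)$ for some nontrivial partition $C_1\sqcup C_2=A_i$, $D_1\sqcup D_2=B_i$ with $C_2$ compatible with $D_1$. Unfolding $\Or_{i-1},\Or'',\Or_i$ into a plane, the straight-line competitor actually visits $\Or''$ only when $\norm{C_1}/\norm{D_1}<\norm{C_2}/\norm{D_2}$, and by the same Cauchy--Schwarz identity as above it is strictly shorter in precisely that case. (P3) forbids exactly this case, so no competing local subpath beats $\Gamma$ at $P$, and $\Gamma$ is a local geodesic.

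\textbf{Main obstacle.} The hardest step is the combinatorial structural claim in $(\Leftarrow)$: that every orthant of $\T_n$ adjacent to both $\Or_{i-1}$ and $\Or_i$ at the transition point $P$ really does arise from a partition of $(A_i,B_i)$ of the form quantified in (P3), so that (P3) is strong enough to rule out every beneficial local detour. This requires a careful description of the link of the boundary tree at $P$ in $\T_n$, showing that local adjacencies at $P$ are parametrized precisely by swaps of a subset of $A_i$ for a compatible subset of $B_i$. A secondary subtlety in $(\Rightarrow)$ is that the refined support may violate (P2); this is absorbed by the merge/reorder remark following Theorem~\ref{geodesic_structure}.
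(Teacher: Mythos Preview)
Your $(\Rightarrow)$ direction has a real gap. The Cauchy--Schwarz identity you compute is correct and is essentially what the paper also uses (phrased there as the triangle inequality), but formula~(\ref{pathlength}) is established only for \emph{proper} path spaces, and the refined support can violate (P2); Example~3 in the paper exhibits exactly this phenomenon. The remark you invoke only merges consecutive pairs with \emph{equal} ratios and does not address a strict (P2) violation. Reordering the offending pairs would require, for instance, $A_{i-1}$ to be compatible with $D_1\subseteq B_i$, which (P1) does not provide; and merging adjacent pairs can only \emph{increase} the value of~(\ref{pathlength}) (by the same Cauchy--Schwarz inequality), so after merging one no longer knows the resulting proper path is strictly shorter than $\Gamma$. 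The paper avoids all of this by localizing: it picks points $s,t$ on $\Gamma$ just before and after the bend (first grouping together any run of consecutive pairs with equal ratios), contracts the edges common to $s$ and $t$, and is left with a two-pair support $(\{\overline C_1,\overline C_2\},\{\overline D_1,\overline D_2\})$ for which (P2) is precisely the hypothesis $\norm{C_1}/\norm{D_1}<\norm{C_2}/\norm{D_2}$, so Theorem~\ref{geodesic_structure} applies directly.

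Your $(\Leftarrow)$ direction also has a gap, which you flag but do not close. The assertion that any orthant $\Or''$ through the transition point $P$ must have edge set of the stated form is false: the link of $P$ in $\T_n$ contains orthants built from splits lying in neither $T$ nor $T'$, and a competing local path may pass through several intermediate orthants, not just one. The paper bypasses any analysis of the link by arguing the contrapositive. If $\Gamma$ is not locally shortest at some bend, then by Theorems~\ref{path_space_thm} and~\ref{proper_path} the \emph{actual geodesic} between nearby points $s,t$ is itself a proper path whose support (after contracting common edges) partitions exactly $A_i$ and $B_i$; its first support pair furnishes $(C_1,D_1)$ with the required strict ratio inequality. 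A separate argument then reduces the case of several consecutive equal ratios $\norm{A_i}/\norm{B_i}=\cdots=\norm{A_j}/\norm{B_j}$ to a violation of (P3) at a single index~$\ell$, a complication your outline does not address.
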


\begin{proof}
First assume that (P3) does not hold.  Then there exists some support pair $(A_\ell,B_\ell)$, together with partition $C_1 \cup C_2$ of $A_\ell$ and partition $D_1 \cup D_2$ of $B_\ell$, such that $C_2$ is compatible with $D_1$ and $\frac{ \norm{C_1}}{ \norm{D_1}} < \frac{ \norm{ C_2}}{ \norm{D_2} }$.  Let $Y$ be the point where $\Gamma$ passes through the intersection between $\Or_{\ell-1}$ and $\Or_\ell$. Suppose the sequence ratios look like
\[
\cdots\leq\frac{\norm{A_{i-1}}}{\norm{B_{i-1}}} < \frac{\norm{A_i}}{\norm{B_i}} = \cdots =\frac{\norm{A_\ell}}{\norm{B_\ell}} = \ldots = \frac{ \norm{A_j} }{ \norm{B_j} } < \frac{ \norm{A_{j+1}} }{ \norm{B_{j+1}} }\leq\cdots
\]
with the $(i-1)^{st}$ and $(j+1)^{st}$ indices absent if $i=1$ or $j=k$. Then the legs of $\Gamma$ going through $\Or_{i-1}$ and $\Or_j$  have point $Y$ in common and have positive length.  Let $s\neq Y\neq t$ be points  before and after $Y$ on $\Gamma$ in $\Or_{i -1 }$ and $\Or_j$, respectively, and let $\Gamma_{st}$ be the portion of $\Gamma$ between $s$ and $t$, which by choice of $s$ and $t$ consists of two straight lines connected at $Y$.  We will construct a path $\Gamma'_{st}$ from $s$ to $t$ that is shorter than $\Gamma_{st}$, and so $\Gamma$ cannot be a geodesic.

The trees corresponding to $s$ and $t$ have edges $B_1 \cup \cdots \cup  B_{i-1} \cup A_{j + 1} \cup \cdots \cup  A_k$ in common, so by the remarks at the beginning of the subsection, these edges change their length uniformly between $s$ and $t$.  Thus we can restrict attention to the trees $s',Y',t'$ comprised of  $s,Y,t$ with their common edges contracted.  The $s'$ and $t'$ have the edges $\overline A  = A_i \cup \cdots \cup A_j$ and $\overline B = B_i \cup \cdots \cup B_j$, respectively, and by Theorem~\ref{geodesic_structure} the edges in $\overline A $ contract uniformly in $\Gamma$ from $s'$ to $Y'$, and the edges in $\overline B $ expand uniformly in $\Gamma$ from $Y'$ to $t'$.  Thus there is a positive real number $c$ such that the length of any edge $e \in \overline A $ at $s'$ is $c \negthinspace \cdot \negthinspace \abs{e}_{T}$, and a positive real number $d$ such that the length of any edge $f \in \overline B $ at $t'$ is $d \negthinspace \cdot \negthinspace \abs{f}_{T'}$.

Let $\overline C_1 = A_i \cup \cdots \cup A_{\ell-1} \cup C_1$,  $\overline C_2  = C_2 \cup A_{\ell+1} \cup \cdots \cup A_j$,  $\overline D_1  = B_i \cup \cdots \cup B_{\ell-1} \cup D_1$, and  $\overline D_2  = D_2 \cup B_{\ell+1} \cup \cdots \cup B_j$.  Then by the properties of the sets involved, $\frac{ \norm{\overline C_1}}{ \norm{\overline D_1}} < \frac{ \norm{\overline C_2}}{ \norm{\overline D_2} }$.  Let $\overline A',\overline B',\overline C'_1,\overline{C}'_2,\overline D'_1,\overline D'_2$ be the edge sets $\overline{A},\overline B ,\overline C_1,\overline C_2,\overline D_1,\overline D_2$ scaled by $c$ or $d$ to represent the edges at the points $s'$ or $t'$.  Since $\overline C_2$ is compatible with $\overline D_1$, and $\frac{ \norm{ \overline C_1}}{ \norm{ \overline D_1 }} < \frac{ \norm{ \overline C_2 }}{ \norm{ \overline D_2 } }$ implies that $\frac{ c  \norm{ \overline C_1 }}{ d  \norm{ \overline D_1 }} < \frac{ c  \norm{ \overline C_2 }}{ d  \norm{ \overline D_2 } }$, then ${ \cal P'} = ( \{ \overline C_1', \overline C_2'\}, \{ \overline D_1',  \overline D_2'\})$ is a proper path space between $s'$ and $t'$.  Thus Theorem~\ref{geodesic_structure} holds here as well.  Let $\Gamma'_{st}$ be the proper path between $s'$ and $t'$ in ${ \cal P'}$.  Then by Theorem \ref{geodesic_structure}, $\Gamma'_{st}$ passes through the relative interior of the orthants in ${\cal P'}$, but not through $Y'$.  Using Equation (\ref{pathlength}), we get that
\[
L(\Gamma_{st})=||\overline{A}'||+|| \overline{B}'||\mbox{ and } L(\Gamma'_{st})=||(|| \overline{C}'_1||,|| \overline{C}'_2||)+(|| \overline{D}'_1||,||\overline{D}'_2||)||
\]
Now $\frac{ \norm{ \overline{C}'_1}}{ \norm{ \overline{D}'_1}} \neq \frac{ \norm{ \overline{C}'_2}}{ \norm{ \overline{D}'_2} }$  together with the triangle inequality implies that $L(\Gamma'_{st})<L(\Gamma_{st})$, and so $\Gamma$ is not the geodesic between $T$ and $T'$.

Conversely, assume that $\Gamma$ is not a geodesic.  By Lemma~\ref{local_optimality}, this is the case if and only if it is not locally shortest in tree space.  If so, then this must also happen at some bend in $\Gamma$ --- i.e. intersection of orthants --- such that we could cut through some additional orthants to shorten its length.  So suppose $Y$ is such a point, with $\Gamma$ bending at the intersection between $\Or_{i-1}$ and $\Or_i$.

We first consider the simple case where  $\frac{ \norm{A_{i-1}} }{ \norm{B_{i-1}} } <  \frac{ \norm{A_i} }{ \norm{B_i} } < \frac{ \norm{A_{i+1}} }{ \norm{B_{i+1}} }$, with the right or left inequality absent if $i = 1$ or $i = k$.  Let $s\neq Y\neq t$ be points before and after $Y$ on $\Gamma$ in $\Or_{i-1}$ and $\Or_i$, respectively.  Then the section of $\Gamma$ between $s$ and $t$ is not the geodesic from $s$ to $t$.  Let ${\cal P}'$ be the proper path space containing the geodesic from $s$ to $t$, with support $(\cal A', \cal B')$, where ${\cal A'} = (A_1', ..., A_{k'}')$ and ${\cal B'} = (B_1', ..., B_{k'}')$.  By our remark, we can assume that ${\cal P}'$ satisfies (P2) with strict inequalities, and note that $k'$ must be greater than 1.  Let $C_1 = A_1'$, $C_2 = A_i \bs C_1$, $D_1 = B_1'$, and $D_2 = B_i \bs D_1$, and set the length of each edge in $C_1$, $C_2$, $D_1$ and $D_2$ to the length that that edge has in $T$ or $T'$.  Then $(C_1,C_2)$ partitions $A_i$ and $(D_1,D_2)$ partitions $B_i$, and further, $C_2$ and $D_1$ are compatible since ${\cal P}'$ satisfies (P1).

It remains to show that $\frac{ \norm{C_1}}{ \norm{D_1}} < \frac{ \norm{ C_2}}{ \norm{D_2} }$.  By the same argument as above, we have positive constants $c$ and $d$ such that for any edge $e \in A_j$, its length at $s$ is $c \negthinspace \cdot \negthinspace \abs{e}_{T}$, and for any edge $f \in B_j$, its length at $t$ is $d  \cdot  \abs{f}_{T'}$.  Since $(\cal A', \cal B')$ was chosen to satisfy (P2) with strict inequalities, we have $\frac{ c \norm{C_1}}{ d  \norm{D_1}} < \frac{ c  \norm{ C_2}}{ d \norm{D_2} }$, and thus $\frac{ \norm{C_1}}{ \norm{D_1}} < \frac{ \norm{ C_2}}{ \norm{D_2} }$ as desired.

Next we consider the case where  $\frac{\norm{A_{i-1}}}{\norm{B_{i-1}}} < \frac{\norm{A_i}}{\norm{B_i}} = \ldots = \frac{ \norm{A_j} }{ \norm{B_j} } < \frac{ \norm{A_{j+1}} }{ \norm{B_{j+1}} }$, again with the right or left inequality absent if $i = 1$ or $j = k$.  By combining the $A_{i},\ldots,A_j$ and $B_{i},\ldots,B_j$ as per the remark above, we can apply the simple case, so that there exist partitions $C_1 \cup C_2$ of $\overline A = A_i \cup \cdots \cup A_j$ and $D_1 \cup D_2$ of $\overline B  = B_i \cup \cdots \cup B_j$, such that  $C_2$ is compatible with $D_1$ and $\frac{ \norm{C_1}}{ \norm{D_1}} < \frac{ \norm{ C_2}}{ \norm{D_2} }$.  Let $x$ be the common value of $\frac{ \norm{A_\ell}^2 }{ \norm{B_\ell}^2 }$, $\ell=i,\ldots,j$, and let $a_{i\ell}= \norm{A_{\ell} \cap C_i}^2$ and $b_{i\ell}=\norm{B_{\ell} \cap D_i}^2$ for $i=1,2$, $\ell=i,\ldots,j$.  Note that $A_{\ell} \cap C_2$ is compatible with $B_{\ell} \cap D_1$, and thus if (P3) holds, it must be that for all $i\leq\ell\leq j$, $\displaystyle \frac {a_{1\ell}}{b_{1\ell}}\geq \frac{a_{2\ell}}{b_{2\ell}}$, and hence
\[
 \displaystyle\frac {a_{1\ell}}{b_{1\ell}} \geq \frac {a_{1\ell}+a_{2\ell}}{b_{1\ell}+b_{2\ell}}\geq \frac{a_{2\ell}}{b_{2\ell}}.
\]
But
\[
\displaystyle\frac {a_{1\ell}+a_{2\ell}}{b_{1\ell}+b_{2\ell}}= \frac{ \norm{ (A_{\ell} \cap C_1) \cup (A_{\ell} \cap C_2)}^2}{ \norm{ (B_{\ell} \cap D_1) \cup (B_{\ell} \cap D_2)}^2}=\frac{ \norm{A_\ell}^2 }{ \norm{B_\ell}^2 } =x.
\]
Thus we have $\displaystyle\frac {a_{1\ell}}{b_{1\ell}} \geq x \geq \frac{a_{2\ell}}{b_{2\ell}}$ for all $i\leq \ell\leq j$, so that
\[
 \frac{ \norm{C_1}^2}{ \norm{D_1}^2}=\frac{ \norm{ \cup_{\ell = i}^j A_{\ell} \cap C_1 }^2}{ \norm{ \cup_{\ell = i}^j B_{\ell} \cap D_1 }^2}=\frac{\sum_{\ell = i}^j a_{1\ell}}{\sum_{\ell = i}^j b_{1\ell}} \geq x \geq \frac{\sum_{\ell = i}^j a_{2\ell}}{\sum_{\ell = i}^j b_{2\ell}}=\frac{ \norm{ \cup_{\ell = i}^j A_{\ell} \cap C_2 }^2}{ \norm{ \cup_{\ell = i}^j B_{\ell} \cap D_2 }^2}=  \frac{ \norm{C_2}^2}{ \norm{D_2}^2}.
\]
But this contradicts the property of $C_1$, $C_2$, $D_1$ and $D_2$, and thus we have found a partition of an individual pair $(A_{\ell}, B_{\ell})$ also violating (P3), as desired.

\end{proof}
\noindent
{\bf Example 2 (continued):} For the example path given in Figures~\ref{fig:treepath} and \ref{fig:path_space}, if we consider the cone path, then (P3) is violated by sets $(C_1,C_2)=(\{e_2,e_3\},\{e_1\})$ and $(D_1,D_2)=(\{e_6\},\{e_4,e_5\})$, since the inequality in Equation (\ref{P2 example}) is strict.  With the added orthant the resulting proper path becomes the geodesic, since there are no nontrivial partitions of either support pair, and hence (P3) holds.

%%%%%%  Section on Algorithm
\section{A Polynomial Algorithm to Solve the Geodesic Treepath Problem}\label{polynomial algorithm:sect}

Theorem~\ref{lem:when_is_support} characterizes when a given proper path $\Gamma$ with support $({\cal A},{\cal B})$ is a geodesic by specifying when a local improvement can be made for the path.  This suggests the following iterative improvement scheme for finding a geodesic between trees $T$ and $T'$:
\begin{enumerate}
\item
Begin with some proper $(T,T')$-path $\Gamma^0$ with support $({\cal A}^0,{\cal B}^0)$.
\item
At each stage we have proper path $\Gamma^\ell$ having support $({\cal A}^\ell,{\cal B}^\ell)$  satisfying condition (P1) and (P2).  Check to see if $({\cal A}^\ell,{\cal B}^\ell)$ also satisfies the condition (P3), and if not, create a new proper path $\Gamma^{\ell+1}$ with support $({\cal A}^{\ell+1},{\cal B}^{\ell+1})$ and having smaller length than $\Gamma^\ell$.
\item
Continue until the geodesic is found.
\end{enumerate}
We now proceed to implement this procedure.  For our starting proper path, choose $\Gamma^0$ to be the {\em cone path}\/ \cite{BHV01}, having support ${\cal A}^0=({\cal E})$ and ${\cal B}^0=({\cal E}')$.  This support vacuously satisfies condition (P1) and (P2), and the path simply corresponds to contracting $T$ and $T'$ uniformly to the origin.

To perform the iterative step, we recast it as a problem on bipartite graphs.  To do this, define the {\em incompatibility graph}\/ $G(A,B)$ between sets $A\subseteq\cal E$ and $B\subseteq\cal E'$ to be the bipartite graph whose vertex set corresponds to $A\cup B$, and whose edges correspond to those pairs $e\in A$ and $f\in B$ such that the corresponding splits $X_e| \overline{ X}_e$ and $X_f| \overline{X}_f$ are incompatible.  An {\em independent set}\/ in $G(A,B)$ is any set of vertices having no edges of $G(A,B)$ between them.  The following lemma follows directly from the definition of compatibility between sets:
\begin{lem}\label{independent_set}
Two edge sets $A\subseteq\cal E$ and $B\subseteq\cal E'$ are compatible if and only if they form an independent set in $G({\cal E},{\cal E}')$.
\end{lem}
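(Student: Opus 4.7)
The plan is to prove both directions of the biconditional by simply unwinding the definitions, noting that the key observation is the bipartite structure of $G({\cal E},{\cal E}')$.

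First I would observe that since $G({\cal E},{\cal E}')$ is a bipartite graph with parts ${\cal E}$ and ${\cal E}'$, any two vertices lying in the same part are automatically non-adjacent. In particular, any two vertices of $A$ are non-adjacent in $G({\cal E},{\cal E}')$ (since $A\subseteq{\cal E}$), and likewise any two vertices of $B$ are non-adjacent. Consequently, the only edges of $G({\cal E},{\cal E}')$ that could lie inside $A\cup B$ are those with one endpoint in $A$ and the other in $B$.

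From here both directions are immediate. For the forward direction, suppose $A$ and $B$ are compatible. By definition of compatibility, every pair $e\in A$, $f\in B$ has compatible splits, so by the definition of $G({\cal E},{\cal E}')$ no such pair is an edge, and combined with the observation above this shows $A\cup B$ is independent. Conversely, if $A\cup B$ is an independent set in $G({\cal E},{\cal E}')$, then in particular no pair $e\in A$, $f\in B$ forms an edge, which by definition means every such pair has compatible splits, i.e. $A$ and $B$ are compatible.

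Since the statement is purely definitional, there is no real obstacle in the argument; the only thing worth flagging explicitly is that the independent set is taken inside $G({\cal E},{\cal E}')$ rather than inside a subgraph $G(A,B)$, which is harmless thanks to the bipartite structure.
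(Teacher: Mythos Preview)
Your proof is correct and takes essentially the same approach as the paper: the paper simply states that the lemma ``follows directly from the definition of compatibility between sets'' and gives no further argument, and your proof is precisely the explicit unwinding of those definitions, with the bipartite structure of $G({\cal E},{\cal E}')$ handling the within-part pairs.
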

We can use Lemma~\ref{independent_set} to restate the problem of determining whether a support $({\cal A}^\ell,{\cal B}^\ell)$ satisfies (P3) as follows:
\tr 4
[\bf Extension Problem]
\tn
[\bf Given:]  Sets $A\subset\cal E$ and $B\subset\cal E'$
\tn
[\bf Question:] Does there exist a partition $C_1 \cup C_2$ of $A$ and a partition $D_1 \cup D_2$ of $B$, such that
\tr 2
[$(i)$] $C_2\cup D_1$ corresponds to an independent set in $G(A,B)$,
\tn
[$(ii)$]$\frac{ \norm{C_1}}{ \norm{D_1}} < \frac{ \norm{ C_2}}{ \norm{D_2} }$\ ?
\tl
\tl
\begin{lem}
A proper path $\Gamma$ with support $({\cal A}^\ell,{\cal B}^\ell)$ is a geodesic if and only if the Extension Problem has no solution for any support pair $(A_i,B_i)$ of $({\cal A}^\ell,{\cal B}^\ell)$.
\end{lem}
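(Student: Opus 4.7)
The plan is to observe that this lemma is a direct translation of the characterization from Theorem~\ref{lem:when_is_support} using Lemma~\ref{independent_set}. Since $\Gamma$ is a proper path by hypothesis, its support $({\cal A}^\ell,{\cal B}^\ell)$ already satisfies (P1) and (P2); thus by Theorem~\ref{lem:when_is_support}, $\Gamma$ is a geodesic if and only if $({\cal A}^\ell,{\cal B}^\ell)$ also satisfies (P3). It therefore suffices to show that (P3) holds for $({\cal A}^\ell,{\cal B}^\ell)$ if and only if the Extension Problem has no solution on any support pair $(A_i,B_i)$.

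For this I would fix a support pair $(A_i,B_i)$ and compare the two conditions directly. A witness to the failure of (P3) at $(A_i,B_i)$ is a pair of partitions $C_1\cup C_2$ of $A_i$ and $D_1\cup D_2$ of $B_i$ such that (a) $C_2$ and $D_1$ are compatible, and (b) $\norm{C_1}/\norm{D_1}<\norm{C_2}/\norm{D_2}$. By Lemma~\ref{independent_set} applied to the graph $G(A_i,B_i)$, condition (a) is equivalent to $C_2\cup D_1$ being an independent set of $G(A_i,B_i)$, which is exactly condition $(i)$ of the Extension Problem; condition (b) is identical to condition $(ii)$ of the Extension Problem. Hence the witnesses to the failure of (P3) at $(A_i,B_i)$ are exactly the solutions of the Extension Problem on input $(A_i,B_i)$, and quantifying over all support pairs gives the claim.

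The only delicate point I anticipate is the qualifier ``nontrivial'' in (P3), which is not explicitly stated in the Extension Problem. I would adopt the standard convention that in both statements a partition has two nonempty blocks; this is also what is implicitly forced in the Extension Problem, since the ratios $\norm{\cdot}/\norm{D_j}$ require $\norm{D_1},\norm{D_2}>0$, and the strict inequality $(ii)$ together with nonnegativity of norms forces $\norm{C_2}>0$. With this reading the match with (P3) is exact and no further argument is needed; the hardest part of the proof is really just lining up the notation correctly between the graph-theoretic formulation and the compatibility condition of the earlier theorem.
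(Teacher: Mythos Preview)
Your proposal is correct and matches the paper's own proof, which simply states that the lemma ``follows immediately from Theorem~\ref{lem:when_is_support} and the previous discussion.'' You have merely unpacked that one-line justification: (P3) is reformulated via Lemma~\ref{independent_set} as the Extension Problem, and your handling of the ``nontrivial'' qualifier is consistent with the convention the paper also adopts (it later remarks explicitly that a solution to the Extension Problem necessarily yields nontrivial partitions).
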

The proof follows immediately from Theorem~\ref{lem:when_is_support} and the previous discussion.

We next proceed to solve the Extension Problem.  Since scaling will not affect $(ii)$, we first scale the edge lengths in $A$ and $B$ so that  $\norm{A}=\norm{B}=1$.  By squaring $(ii)$, we get the equivalent condition
\[
\frac{1- \norm{C_2}^2}{ \norm{D_1}^2} < \frac{ \norm{ C_2}^2}{1- \norm{D_1}^2 }
\]
or
\[
\norm{C_2}^2 + \norm{D_1}^2=\displaystyle\sum_{e\in C_2}|e|^2+\sum_{f\in D_1}|f|^2>1.
\]
Thus the Extension Problem reduces to that of finding an independent set in $G(A,B)$ having sufficiently large total weight, where the vertices are weighted by the normalized squares of the edge lengths of $A$ and $B$.

Now note that the pair $C_2$ and $D_1$ form an independent set in $G(A,B)$ if and only if their complements $C_1$ and $D_2$ form a {\em vertex cover}\/ for  $G(A,B)$, that is, every edge of $G(A,B)$ is incident to a vertex of either $C_1$ or $D_2$.  Thus the Extension problem has a solution if and only if the {\em min weight vertex cover}\/ for $G(A,B)$ has weight $\norm{C_1}^2 + \norm{D_2}^2<1$.  (Note that a solution to the extension problem will necessarily result in a nontrivial cover, and hence nontrivial partitions $(C_1,C_2)$ and $(D_1,D_2)$.)
\begin{lem}\label{extension_complexity}
The Extension Problem can be solved in $O(n^3)$ time.
\end{lem}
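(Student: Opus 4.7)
The plan is to exploit the reduction just developed: the Extension Problem has a solution if and only if the minimum weight vertex cover of the bipartite graph $G(A,B)$, with vertices weighted by the normalized squared edge lengths, has weight strictly less than $1$. Since minimum weight vertex cover in bipartite graphs is polynomial-time solvable via its classical equivalence with maximum flow (K\"onig--Egerv\'ary), this puts the Extension Problem into a standard algorithmic framework.

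Concretely, I would construct an auxiliary flow network $H$ as follows: adjoin a source $s$ and a sink $t$, attach an arc $s \to e$ of capacity $\abs{e}^2$ for each $e \in A$, an arc $f \to t$ of capacity $\abs{f}^2$ for each $f \in B$, and an arc $e \to f$ of infinite capacity for every incompatible pair $\{e,f\}$ --- i.e.\ every edge of $G(A,B)$.  Any finite $s$-$t$ cut can only sever the finite-capacity source and sink arcs, so cuts correspond bijectively to vertex covers of $G(A,B)$: taking $C_1$ to be the $A$-vertices whose source arcs are cut and $D_2$ to be the $B$-vertices whose sink arcs are cut yields a cover of weight $\norm{C_1}^2 + \norm{D_2}^2$.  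A minimum $s$-$t$ cut therefore produces a minimum weight vertex cover; setting $C_2 = A \setminus C_1$ and $D_1 = B \setminus D_2$, the complement $C_2 \cup D_1$ is independent in $G(A,B)$, so $C_2$ is compatible with $D_1$ and condition $(i)$ of the Extension Problem is automatic.

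The running time then follows from standard max-flow bounds.  The network $H$ has $|A|+|B|+2 = O(n)$ vertices and $O(n^2)$ arcs, and any of the classical maximum-flow algorithms (for instance the highest-label variant of push--relabel) computes the min cut in $O(n^3)$ time.  Finally, we declare a solution to the Extension Problem precisely when the min cut value is strictly less than $1$; since the two trivial covers $A$ and $B$ each have weight exactly $1$ under the normalization $\norm{A}=\norm{B}=1$, any minimum cover of weight below $1$ is automatically nontrivial, yielding the nontrivial partitions $(C_1,C_2)$ and $(D_1,D_2)$ required.

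The only point I expect to require care --- rather than a genuine obstacle --- is verifying the tight correspondence between finite $s$-$t$ cuts and vertex covers: the infinite-capacity arcs force every finite cut to separate each $G(A,B)$-edge by cutting a source or sink arc at one of its endpoints, so finite cuts and covers are in weight-preserving bijection and the minima agree.
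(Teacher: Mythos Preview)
Your proposal is correct and takes essentially the same approach as the paper: both reduce to minimum weight vertex cover on the bipartite incompatibility graph and solve it via the standard max-flow formulation with vertex weights as source/sink capacities, invoking an $O(n^3)$ max-flow algorithm. You simply spell out explicitly the flow network and the cut--cover bijection that the paper leaves to a textbook reference.
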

\begin{proof}
The Min Weight Vertex Cover Problem can be solved on bipartite graphs by a simple extension of the max flow formulation of the Min Cardinality Vertex Cover Problem (see e.g. \cite{ahuja}, Section 12.3), using the vertex weights as capacities on the source and sink arcs.  Maximum flows can be found in $O(n^3)$ time (see e.g. \cite{ahuja}, Section 7.7).
\end{proof}

The solution to the Extension Problem also suggests what the new proper path $\Gamma^{\ell+1}$ should look like.  Namely, if the Extension Problem for support sets $A_i$ and $B_i$ results in a min weight cover by vertex sets $C_1\subset A_i$ and $D_2\subset B_i$ with complements $C_2$ and $D_1$, respectively, then we replace $A_i$ and $B_i$ in $\cal A$ and $\cal B$ by the ordered pairs $(C_1,C_2)$ and $(D_1,D_2)$.  We summarize this in a formal algorithm.

\begin{center}{\bf The GTP Algorithm}\end{center}
\tr 3
[\bf Input:] $n$-trees $T=(X,{\cal E},\S)$ and $T'=(X,{\cal E}',\S')$
\tn
[\bf Output:] The path space geodesic between $T$ and $T'$
\tn
[\bf Algorithm:]\
\tr 2
[\bf Initialize:] Form the incompatibility graph $G({\cal E},{\cal E}')$ between $T$ and $T'$, and set $\Gamma^0$ to be the cone path between $T$ and $T'$ with support ${\cal A}^0=({\cal E})$ and ${\cal B}^0=({\cal E}')$.
\tn
[\bf Iterative step:]
At stage $\ell$, we have proper path $\Gamma^\ell$ with support $({\cal A}^\ell,{\cal B}^\ell)$  satisfying conditions (P1) and (P2).
\tr 2
[{\bf for} each]support pair $(A_i,B_i)$ in $({\cal A}^\ell,{\cal B}^\ell)$, solve the Extension Problem on $(A_i,B_i)$.  Specifically, find a min weight vertex cover for the graph $G(A_i,B_i)$ using vertex weights
\[
w_e=\left\{
\begin{array}{rl}
\frac{|e|^2}{\norm{A_i}^2}&e\in A_i\\
\frac{|e|^2}{\norm{B_i}^2}&e\in B_i
\end{array}
\right.
\]
\tn
[{\bf if} every]min weight cover found above has weight $\geq 1$, then $\Gamma^\ell$ satisfies (P3), and hence is the geodesic between $T$ and $T'$.
\tn
[\bf else]choose any min weight vertex cover $C_1\cup D_2$, $C_1\subset A_i$ and $D_2\subset B_i$ with complements $C_2$ and $D_1$, respectively, having weight $\frac{\norm{C_1}^2}{\norm{A_i}^2} + \frac{\norm{D_2}^2}{\norm{B_i}^2}<1$.  Replace $A_i$ and $B_i$ in ${\cal A}^\ell$ and ${\cal B}^\ell$ by the ordered pairs $(C_1,C_2)$ and $(D_1,D_2)$, respectively, to form new support $({\cal A}^{\ell+1},{\cal B}^{\ell+1})$ with associated proper path $\Gamma^{\ell+1}$.
\tl
\tl
\tl
To establish the correctness of the GTP Algorithm, we need to verify that the resulting path $\Gamma^{\ell+1}$ is indeed proper, i.e. that (P2) holds.

\begin{lem}{}\label{proper_inequality_lemma}
At each stage of the GTP Algorithm, the associated path space satisfies property (P2).
\end{lem}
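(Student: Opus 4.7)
\bigskip
\noindent\textbf{Plan.} The plan is to proceed by induction on the iteration index $\ell$. The base case $\ell = 0$ is trivial: the cone path $\Gamma^0$ has support $({\cal A}^0, {\cal B}^0) = (({\cal E}), ({\cal E}'))$ consisting of a single pair, so the chain in (P2) is empty and the condition holds vacuously.

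For the inductive step, assume $({\cal A}^\ell, {\cal B}^\ell)$ satisfies (P2). The algorithm forms $({\cal A}^{\ell+1}, {\cal B}^{\ell+1})$ by replacing some pair $(A_i, B_i)$ with ordered pairs $(C_1, D_1), (C_2, D_2)$, where $C_1 \cup D_2$ is a minimum weight vertex cover of $G(A_i, B_i)$ whose normalized weight is strictly less than $1$. Writing $\alpha_j := \norm{A_j}/\norm{B_j}$, $\beta_1 := \norm{C_1}/\norm{D_1}$, and $\beta_2 := \norm{C_2}/\norm{D_2}$, all of the old (P2) inequalities outside positions $i$ and $i{+}1$ are inherited, so I need only verify the three new inequalities
\[
\alpha_{i-1} \leq \beta_1 \leq \beta_2 \leq \alpha_{i+1}
\]
(omitting the extreme ones when $i=1$ or $i=k$).

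The central inequality $\beta_1 \leq \beta_2$ is a short algebraic consequence of the weight bound $\frac{\norm{C_1}^2}{\norm{A_i}^2} + \frac{\norm{D_2}^2}{\norm{B_i}^2} < 1$. Substituting $\norm{A_i}^2 = \norm{C_1}^2 + \norm{C_2}^2$ and $\norm{B_i}^2 = \norm{D_1}^2 + \norm{D_2}^2$, I rearrange to obtain both $\norm{C_1}^2 \norm{B_i}^2 < \norm{A_i}^2 \norm{D_1}^2$ and $\norm{D_2}^2 \norm{A_i}^2 < \norm{C_2}^2 \norm{B_i}^2$, which together yield $\beta_1 < \alpha_i < \beta_2$.

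The main obstacle is the outer interlacing $\alpha_{i-1} \leq \beta_1$ and $\beta_2 \leq \alpha_{i+1}$, which is \emph{not} implied by the old (P2) together with $\beta_1 < \alpha_i$: knowing $\alpha_{i-1} \leq \alpha_i$ and $\beta_1 < \alpha_i$ does not by itself rule out $\alpha_{i-1} > \beta_1$. My plan for the left inequality is a proof by contradiction exploiting the minimum-weight property of the cover. Assume $\alpha_{i-1} > \beta_1$; using (P1) for the old support (in particular, that $A_i$ is compatible with $B_{i-1}$, so the incompatibility graph has no edges between them) together with the ratio gap $\alpha_{i-1} > \beta_1$, I would construct a competing vertex cover of $G(A_i, B_i)$ whose total weight is strictly less than that of $C_1 \cup D_2$, contradicting minimality. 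Concretely, I expect the competing cover to be produced by swapping a judiciously chosen subset between $C_1$ and $C_2$ (together with a compensating swap between $D_1$ and $D_2$), with the weight drop forced by the ratio inequality. The right inequality $\beta_2 \leq \alpha_{i+1}$ follows by the symmetric argument that exchanges the roles of the $A$'s and $B$'s and reverses the order of the support sequence.
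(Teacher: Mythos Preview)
Your central inequality $\beta_1<\alpha_i<\beta_2$ is fine, but the plan for the outer inequalities has a genuine gap. You propose to derive $\alpha_{i-1}\le\beta_1$ by contradicting the minimality of the cover $C_1\cup D_2$ of $G(A_i,B_i)$. But every vertex of $G(A_i,B_i)$ lies in $A_i\cup B_i$, and every cover's weight is a function only of $\norm{\cdot}$-values of subsets of $A_i$ and $B_i$. The quantity $\alpha_{i-1}=\norm{A_{i-1}}/\norm{B_{i-1}}$ simply does not appear in any such weight, so no ``competing cover of $G(A_i,B_i)$'' can see it. The compatibility of $A_i$ with $B_{i-1}$ that you cite is a statement about edges \emph{outside} $G(A_i,B_i)$ and does not help either. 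There is no ``judicious swap inside $A_i,B_i$'' that produces $\alpha_{i-1}$.

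In fact the inequality $\alpha_{i-1}\le\beta_1$ is \emph{false} in general under your hypotheses: Example~3 in the paper exhibits a proper path (so (P1) and (P2) hold) and a minimum-weight cover split of one block for which the refined sequence violates (P2). So the outer inequality cannot be proved from (P1), (P2), and minimality of the \emph{current} cover alone. What is missing is the refinement history: the fact that $(A_{i-1},B_{i-1})$ and $(A_i,B_i)$ were themselves created at some earlier stage $r$ by a minimum-weight cover $(U_1,V_2)$ of a larger graph $G(X,W)$ containing both blocks. The paper's proof uses minimality of \emph{that} earlier cover: since (P1) forbids edges crossing the relevant boundaries, both $(U_1\setminus A_{i-1})\cup(V_2\cup B_{i-1})$ and $(U_1\cup C_1)\cup(V_2\setminus D_1)$ are covers of $G(X,W)$, and comparing their weights to that of $U_1\cup V_2$ yields
\[
\frac{\norm{A_{i-1}}^2}{\norm{B_{i-1}}^2}\ \le\ \frac{\norm{X}^2}{\norm{W}^2}\ \le\ \frac{\norm{C_1}^2}{\norm{D_1}^2}.
\]
You should rework the outer inequalities along these lines, tracking the stage at which adjacent blocks were first separated.
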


\begin{proof}
The cone path is trivially proper, so now assume by induction that (P2) holds after stage $\ell-1$ of the algorithm.  Let $({\cal A}^{\ell-1},{\cal B}^{\ell-1})$ be the support for $\Gamma^{\ell-1}$, comprised of the $\ell$ support pairs $(A_1^{(\ell-1)}, B_1^{(\ell-1)}), \ldots,(A_\ell^{(\ell-1)}, B_\ell^{(\ell-1)} )$.  Since the $A_i^{(\ell-1)}$'s are dropped and the $B_i^{(\ell-1)}$'s are added in the order given by their index, we can represent this with visually with a left-to-right ordering of the $A_i^{(\ell-1)}$'s and $B_i^{(\ell-1)}$'s, as in Figure~\ref{proper_inequality:fig}.  Furthermore, since each support pair $(A^{\ell-1}_i,B^{\ell-1}_i)$ must be entirely contained in some support pair $(A^r_j,B^r_j)$ at every stage $r < \ell$ of the algorithm, the added support pairs maintain the existing left-to-right order.   Thus we can depict several partitions with different degrees of refinement in the same diagram for instructional purposes (see Figure~\ref{proper_inequality:fig}).  Since (P1) holds at each stage, then there can be no edges of the incompatibility graph between an element of $A^r_i$ and an element of any group $B^r_j$ to the ``left'' of $A^r_i$.

\begin{figure}[htb]
\centering
\input{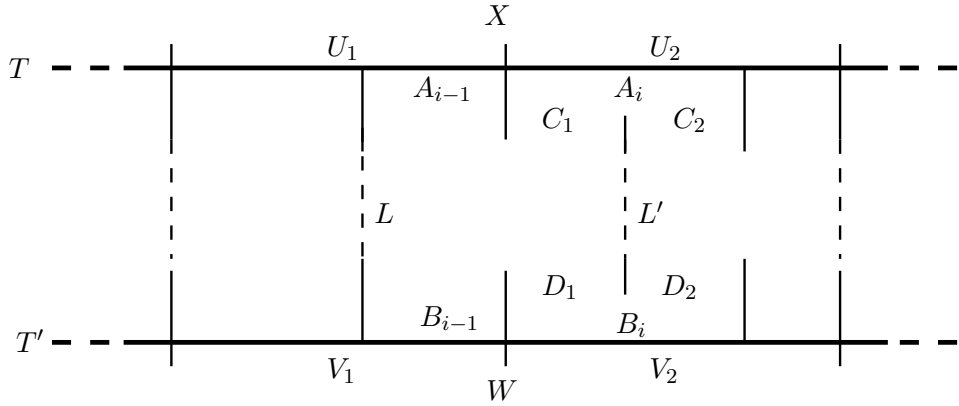}
\caption{The refinements of $\mathcal{E}$ and $\mathcal{E}'$ referred to in the proof of Lemma~\ref{proper_inequality_lemma}.  }
\label{proper_inequality:fig}
\centering
\end{figure}

Now suppose that at stage $\ell$ some support pair $(A_i^{(\ell-1)}, B_i^{(\ell-1)} )$ returns a nontrivial solution to the Extension Problem, comprised of partitions $(C_1,C_2)$ of $A_i^{(\ell-1)}$ and $(D_1,D_2)$ of  $B_i^{(\ell-1)}$, with $C_2$ compatible with $D_1$ and $\frac{ \norm{ C_1}}{ \norm{ D_1}} < \frac{ \norm{C_2}}{ \norm{ D_2 }}$.  Dropping the superscript $(\ell-1)$, we first show that if $i > 1$ then $\frac{ \norm{A_{i-1}} }{ \norm{B_{i-1} } } \leq  \frac{ \norm{ C_1}}{ \norm{D_1}}$.

Let $r < \ell$ be the stage at which sets $A_{i-1}$ and $A_i$ (and hence also sets $B_{i-1}$ and $B_i$) are separated in the partition.  That is, in stage $r-1$ sets $A_{i-1}$ and $A_i$ are in the same partition $X=A^{(r-1)}_j$, and sets $B_{i-1}$ and $B_i$ are in the same partition $W=B^{(r-1)}_j$.  Then in stage $r$ the minimum weight vertex cover $U_1\cup V_2$  is found associated with the Extension Problem on $(X,W)$, creating partitions $(U_1,U_2)$ of $X$ and $(V_1,V_2)$ of $W$, with $A_{i-1}\in U_1$, $A_i\in U_2$, $B_{i-1}\in V_1$, and $B_i\in V_2$.

Consider the vertical lines $L$ and $L'$ in Figure~\ref{proper_inequality:fig}. From  the discussion above, there can be no incompatibility-graph edges from the right of $L$ in $T$ to the left of $L$ in $T'$, and hence $(U_1 \bs A_{i-1}) \cup (V_2 \cup B_{i-1})$ is a vertex cover for $G(X,W)$.  Likewise there can be no incompatibility-graph edges from the right of $L'$ in $T$ to the left of $L'$ in $T'$, and hence $(U_1 \cup C_1) \cup (V_2 \bs D_1)$ is also a vertex cover for $G(X,W)$.  But because $U_1 \cup V_2$ is a minimum weight vertex cover, then it must have weight no greater than either of these covers.  Hence
\begin{eqnarray*}
\frac{ \norm{ U_1}^2 }{ \norm{ U_1 }^2 + \norm{ U_2}^2} + \frac{ \norm{ V_2}^2 }{ \norm{ V_1 }^2 + \norm{ V_2}^2} \leq  \frac{ \norm{ U_1 \bs A_{i-1} }^2 }{ \norm{ U_1 }^2 + \norm{ U_2}^2} + \frac{ \norm{ V_2 \cup B_{i-1} }^2 }{ \norm{ V_1 }^2 + \norm{ V_2}^2} \hspace*{7em}\\[.5em]
=  \frac{ \norm{ U_1 }^2 }{ \norm{ U_1 }^2 + \norm{ U_2}^2}  - \frac{ \norm{ A_{i-1} }^2 }{ \norm{ U_1 }^2 + \norm{ U_2}^2}  + \frac{ \norm{ V_2  }^2 }{ \norm{ V_1 }^2 + \norm{ V_2}^2}  + \frac{ \norm{ B_{i-1}  }^2 }{ \norm{ V_1 }^2 + \norm{ V_2}^2}\\
\end{eqnarray*}
and

\begin{eqnarray*}
\frac{ \norm{ U_1}^2 }{ \norm{ U_1 }^2 + \norm{ U_2}^2} + \frac{ \norm{ V_2}^2 }{ \norm{ V_1 }^2 + \norm{ V_2}^2} \leq  \frac{ \norm{ U_1 \cup C_1  }^2 }{ \norm{ U_1 }^2 + \norm{ U_2}^2} + \frac{ \norm{ V_2 \bs D_1 }^2 }{ \norm{ V_1 }^2 + \norm{ V_2}^2} \hspace{7em}\\[.5em]
= \frac{ \norm{ U_1 }^2 }{ \norm{ U_1 }^2 + \norm{ U_2}^2}  + \frac{ \norm{ C_1  }^2 }{ \norm{ U_1 }^2 + \norm{ U_2}^2}  + \frac{ \norm{ V_2  }^2 }{ \norm{ V_1 }^2 + \norm{ V_2}^2}  - \frac{ \norm{ D_1   }^2 }{\norm{ V_1 }^2 + \norm{ V_2}^2}. \\ \end{eqnarray*}
By cancelling terms and cross-multiplying we get\\
\[
 \frac{ \norm{ A_{i-1} }^2 }{ \norm{B_{i-1}  }^2 }\leq\frac{\norm{ U_1 }^2 + \norm{ U_2}^2}{ \norm{ V_1 }^2 + \norm{ V_2}^2} \leq \frac{ \norm{ C_1 }^2 }{ \norm{ D_1  }^2 }, \\
\]
and the inequality follows.

The argument that if $i<\ell$ then $\frac{ \norm{C_2}}{ \norm{ D_2}} \leq \frac{ \norm{A_{i+1}} }{ \norm{B_{i+1} } }$ is symmetric.  As the other ratios remained unchanged, we have (P2) satisfied after stage $l$ as well, and the lemma follows.
\end{proof}
\noindent {\bf Example 3:}  Lemma~\ref{proper_inequality_lemma} does not necessarily hold outside the context of the GTP algorithm.  In particular, the algorithm may not work correctly if an arbitrary proper path is chosen as the starting path.  Consider tree $T$ in Figure~\ref{tree_examples:fig} and the tree $T'$ given by the three splits $f_1:\{1,3\}|\{0,2,4,5\}$, $f_2:\{1,3,4\}|\{0,2,5\}$, and $f_3:\{1,3,4,5\}|\{0,2\}$ which have lengths $4$, $10$, and $2$, respectively.  Then $\A = (\{e_1, e_2\}, \{e_3\})$ and $\B = (\{f_1, f_2\}, \{f_3\})$ is the support of a proper path between $T$ and $T'$, since
\begin{align*}
\frac{||A_1||}{||B_1||}=\frac{||(10,4)||}{||(4,10)||} = 1< \frac{3}{2}=\frac{||A_2||}{||B_2||}.
\end{align*}
Now (P3) fails for support pair $(\{e_1, e_2\}, \{f_1, f_2\})$, since $f_2$ is compatible with $e_1$ and $\frac{||\{e_2\}||}{||\{f_2\}||} = \frac{4}{10} < \frac{10}{4} = \frac{||\{e_1\}||}{||\{f_1\}||}$.  The refinement $\A' = (\{e_2\},\{e_1\},\{e_3\})$ and $\B' = (\{f_2\},\{f_1\},\{f_3\})$ indicated by the GTP Algorithm, however, is not the support of a proper path, because $\frac{||\{e_1\}||}{||\{f_1\}||} = \frac{10}{4} > \frac{3}{2} = \frac{||\{e_3\}||}{||\{f_3\}||}$.  Instead, the support of our new, shorter proper path is $\A'' = (\{e_2\}, \{e_1,e_3\})$ and $\B'' = (\{f_2\}, \{f_1,f_3\})$, which is not even a refinement of  $(\{e_1, e_2\}, \{e_3\}),(\{f_1, f_2\}, \{f_3\})$.  Note that when we start with the cone path for $\Gamma^0$, however, we obtain the optimal path after a single iteration of the algorithm.

\begin{thm} The GTP Algorithm correctly solves GTP in $O(n^4)$ time.
\end{thm}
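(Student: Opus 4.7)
The plan is to establish three things in order: that every iterate $\Gamma^\ell$ produced by the algorithm is a proper $(T,T')$-path, that the algorithm terminates with a support satisfying (P3) (and hence returns the geodesic by Theorem~\ref{lem:when_is_support}), and that the total running time is $O(n^4)$.

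For properness, the cone path $\Gamma^0$ is trivially proper since its support is a single pair. For the inductive step, suppose $\Gamma^\ell$ is proper with support $({\cal A}^\ell,{\cal B}^\ell)$, and the algorithm splits a pair $(A_i,B_i)$ using a min weight vertex cover $C_1\cup D_2$ of $G(A_i,B_i)$, whose complementary independent set is $C_2\cup D_1$. Property (P2) for the refined support is exactly the conclusion of Lemma~\ref{proper_inequality_lemma}. For (P1), the only cross-pair compatibilities that are not already inherited from those on $(A_i,B_i)$ are: $C_2$ with $D_1$ (at adjacent new positions $i+1$ and $i$), which is immediate from $C_2\cup D_1$ being an independent set in $G(A_i,B_i)$; and compatibilities involving $C_1,C_2\subseteq A_i$ and $B_j$ with $j<i$, and $A_k$ with $k>i$ and $D_1,D_2\subseteq B_i$, all of which follow from the subset relation and the original (P1) conditions on $({\cal A}^\ell,{\cal B}^\ell)$.

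For termination and correctness: each iteration strictly increases the number of support pairs by one, and each support pair is nonempty, so the algorithm performs at most $|{\cal E}|+|{\cal E}'|-1=O(n)$ iterations. When it halts, every support pair has min weight vertex cover of weight at least $1$, which by the reduction preceding Lemma~\ref{extension_complexity} means the Extension Problem has no solution on any pair, i.e., (P3) holds at every pair. Theorem~\ref{lem:when_is_support} then identifies the final $\Gamma^\ell$ as the geodesic between $T$ and $T'$.

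For the complexity, each iteration solves the Extension Problem on each current support pair, which by Lemma~\ref{extension_complexity} costs $O(m_i^3)$ for pair size $m_i=|A_i|+|B_i|$. Since $\sum_i m_i\leq 2(n-2)$, convexity gives $\sum_i m_i^3\leq \bigl(\sum_i m_i\bigr)^3=O(n^3)$ per iteration; multiplying by the $O(n)$ iterations yields the claimed $O(n^4)$ total. The main obstacle is showing that (P2) persists under the algorithm's refinements, which is the content of Lemma~\ref{proper_inequality_lemma}; as Example~3 illustrates, this is genuinely a property of starting from the cone path and applying the algorithm's prescribed splits, not a general fact about refining arbitrary proper paths.
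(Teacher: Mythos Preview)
Your argument is correct and tracks the paper's proof closely: properness via Lemma~\ref{proper_inequality_lemma}, termination by counting support pairs, and optimality via Theorem~\ref{lem:when_is_support}. The one substantive difference is in the complexity accounting. You follow the algorithm as literally stated---solve the Extension Problem on \emph{every} current support pair at each iteration---and bound the per-iteration cost by $\sum_i O(m_i^3)\le O\bigl((\sum_i m_i)^3\bigr)=O(n^3)$, then multiply by the $O(n)$ iterations. The paper instead observes that once a support pair has been tested and found to satisfy (P3), splitting a \emph{different} pair cannot change its status, so only newly created pairs ever need to be examined; this caps the \emph{total} number of Extension Problems solved over the whole run at $n-3$, each costing $O(n^3)$ by Lemma~\ref{extension_complexity} as stated. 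Both routes yield $O(n^4)$, but the paper's version avoids having to reinterpret Lemma~\ref{extension_complexity} in terms of the local pair size $m_i$ and is marginally tighter in practice. Your explicit verification of (P1) for the refined support is a welcome addition that the paper leaves implicit.
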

\begin{proof}
Lemma~\ref{proper_inequality_lemma} implies that each successful solution to the Extension Problem results in a proper path whose support has one more support pair, so that after at most $n-3$ iterations the algorithm will be unable to find any further nontrivial solutions to the Extension Problem.  It follows that (P3) is satisfied, and so by Theorem~\ref{lem:when_is_support} the resulting path is the geodesic.  Further, we need only solve the Extension Problem on newly created support pairs, since an extension for one support pair will not change the status of any other support pairs.  Thus at most $n-3$ vertex cover problems are be solved throughout the entire algorithm.  The complexity of the algorithm then follows from Lemma~\ref{extension_complexity}.
\end{proof}

Note that in each iteration the new path $\Gamma^{\ell+1}$ satisfies $L(\Gamma^{\ell+1})<L(\Gamma^\ell)$.  This is straightforward to show, although it does not have a direct bearing on the correctness of the algorithm, since the termination of the algorithm is determined only by $\Gamma^\ell$ satisfying property (P3).  It does show, however, that the algorithm is a {\em bona fide}\/ iterative improvement algorithm.
\section{The GTP algorithm with common edges and leaf edge-lengths present}\label{common edges alg:sect}
We finish the analysis by showing how to handle common edges, including the leaf edges, between the terminal trees.  This expanded algorithm allows us to include lengths on leaf edges.  It also allows leaves --- including the root --- to have degree greater than one, since setting the length of the associated leaf edge to 0 contracts the leaf into a vertex with higher degree.

To handle common edges, we first note from \cite[Theorem 2.1]{Owen09} that if $T$ and $T'$ share common edges, then these common edges will be present in every tree on the geodesic, with their lengths changing uniformly between those of their starting and ending trees.   This suggest the following procedure for dealing with common edges:
\begin{enumerate}
\item
Identify the set $C$ of all common nonleaf edges in both trees.  Also let $L$ be the set of leaf edges.
\item
Bisect each edge $e\in C$ by adding midpoint $v_e$.
\item
Separate $T$ and $T'$ at each of the vertices $v_e$ for $e\in C$.  By definition this will leave a collection of pairs of disjoint subtrees $(T(\ell),T'(\ell))$ of $T$ and $T'$, indexed by $\ell=1,\ldots,r$ and with each pair having identical sets of leaves.
\item
For each pair of trees in this collection, apply the GTP Algorithm.   Let $(A_1(\ell),\ldots,A_{k_\ell}(\ell))$ and $(B_1(\ell),\ldots,B_{k_\ell}(\ell))$, $l=1,\ldots,r$, be the support for the associated paths.
\item
The composite path can be described as in Theorem~\ref{geodesic_structure}, with the following modifications:
\begin{enumerate}
\item
For each $\lambda$, the lengths of the edges in the tree $T_i(\ell)$ associated with the pair $(T(\ell),T'(\ell))$ will be as given in Theorem~\ref{geodesic_structure}.  The $\lambda$ is common across all pairs.
\item
Each edge $e\in C$ reconnects the $T(\ell)$'s by reattaching them at $v_e$, with the splits defined accordingly.
\item
The length of each common edge $e\in C\cup L$ on the path is
\[
(1-\lambda)|e|_T+\lambda |e|_{T'}.
\]
\item
The length of $\Gamma$ is\vspace{-1em}
\tr{-4}
\begin{eqnarray*}
L(\Gamma)&=&\bigg\Arrowvert(||A_1(1)||+||B_1(1)||,\ldots,||A_{k_1}(1)||+||B_{k_1}(1)||,\ldots,\\
&&\ \ \ ||A_1(r)||+||B_1(r)||,\ldots,||A_{k_r}(1)||+||B_{k_r}(r)||,\\
&&\ \ \ |e_C|_{_T}-|e_C|_{_{T'}})\bigg\Arrowvert
\end{eqnarray*}
\tl
where $|e_C|_{_{T}}$ and $|e_C|_{_{T'}}$ are the vectors of the lengths of the common edges in the appropriate tree.
\end{enumerate}
\end{enumerate}

Since the partitioning of the tree can be done in linear time, this will not increase the complexity of the algorithm.  An implementation of this algorithm is available at http://www.stat-or.unc.edu/webspace/miscellaneous/provan/treespace.
\section{Conclusion}
This paper presents the first polynomial time algorithm for finding geodesics between phylogenetic trees in tree space, as well as further characterizing properties of geodesics.  This significantly increases the usefulness of the geodesic distance as a modeling tool, since the previous exponential algorithms essentially restricted the geodesic distance measure to trees with fewer than 50 leaves.

We first note that the technique presented here also solves GTP in the case where there is a specific right-left ordering on the non-root leaves of the tree, or equivalently, where the tree must be planar with respect to a given clockwise ordering of the leaves.  An example of such trees are binary search trees.  This condition simply adds to the definition of a tree that the splits of the tree must be {\em noncrossing}, that is, for any split $X_e|\overline X_e$ there are no pairs $v_1,v_2\in X_e$ and $v_3,v_4\in \overline X_e$ which appear in clockwise order $v_1,v_3,v_2,v_4$.  Since if $T$ and $T'$ both satisfy the noncrossing property property, and all of the splits in the intermediate trees on the geodesic between $T$ and $T'$ are made up of the splits of $T$ and $T'$, then these trees must also satisfy the noncrossing property, and so the geodesic for this case is the same as that for the unrestricted case.

The properties and techniques given here potentially apply to the much wider range of problems and measures on trees that make use of the intrinsic Euclidean nature of tree space.  For example, Nye \cite{Nye} compares and groups trees through the idea of ``medial trees'' which serve as representatives for topological types within  data sets of trees.  Hillis et al. \cite{Hillis} also investigate tree sets using distance as the distinguishing feature to find statistical groupings and common features.  Using a more Euclidean-related measure for dissimilarity could allow more powerful statistical techniques to be employed in these situations.

Billera et al. \cite{BHV01} look at the concept of medial trees in their paper by defining the {\em centroid}\/ of a set of points in tree space.  Their definition involves an iterative process that is based on finding a converging sequence of midpoints of geodesics between trees.  The implementation of this would require a fast method of computing geodesics.  Another way of thinking about a centroid in standard Euclidean space, though, is as the point of minimum sum squared distance to the trees.  The framework for finding geodesics here naturally lends itself to finding centroids in this alternate sense as well, and could yield a more direct and efficient way of computing centroids.

A further extension of the idea of centroid comes up in the development of {\em object oriented data analysis (OODA)}\/ as it has been applied to trees \cite{WangMarron07}.  This involves fitting a ``line'' to a set of trees in such a way as to minimize least-squares distances.  The set of nearest points on this line can then be analyzed to yield statistical discriminators that can in turn isolate significant properties of the underlying objects.  This can be done a second time, as a result gaining ``second-order'' information about the set of trees, and so on.  Two problems with OODA have been (a) the difficulty in determining the right concept of ``line'' and ``least-square distances'' when the objects are not Euclidean in nature, and (b) the computational challenge in actually finding these ``least-fit'' objects.  The path space concept presents a compelling model for facilitating both these kinds of analyses, with the CAT(0) property providing the framework for efficient iterative improvement methods to extract useful statistical information in this context.

\section*{Acknowledgment}
This material was based upon work partially supported by the National
Science Foundation under Grant DMS-0635449 to the Statistical and
Applied Mathematical Sciences Institute. Any opinions, findings, and
conclusions or recommendations expressed in this material are those of
the authors and do not necessarily reflect the views of the National
Science Foundation.

\bibliography{polyGeo}
% Additional citations

%\bibitem{Hillis}
%Analysis and Visualization of Tree Space
%David M. Hillis , Tracy A. Heath, and Katherine St. John
%Syst. Biol. 54(3):471–482, 2005

%\bibitem{Nye}
%Trees of Trees: An Approach to Comparing Multiple Alternative Phylogenies
%Tom M.W. Nye
%Syst. Biol. 57(5):785–794, 2008
%

\end{document}